\documentclass[a4paper,USenglish]{lipics-v2016-arxiv}
\usepackage{amssymb,amsmath,amsthm}
\bibliographystyle{plainurl}
\usepackage{hyperref}

\usepackage{tikz}
\usetikzlibrary{arrows,shapes,plotmarks}

\usepackage{xspace}
\usepackage{mathtools}
\usepackage{enumitem}
\setlist{noitemsep}








\newcommand{\cl}[1]{\ensuremath{\mathsf{#1}}}
\renewcommand{\P}{\cl{P}\xspace}
\newcommand{\NP}{\cl{NP}\xspace}
\newcommand{\FPT}{\cl{FPT}\xspace}
\newcommand{\Wone}{\cl{W[1]}\xspace}
\newcommand{\Wtwo}{\cl{W[2]}\xspace}

\DeclarePairedDelimiter{\civ}{[}{]}

\DeclarePairedDelimiter{\roiv}{[}{)}

\DeclareMathOperator{\reali}{Reali}

\newcommand{\Reals}{\mathbb{R}}
\newcommand{\Ints}{\mathbb{Z}}

\newcommand{\mypara}[1]{\vspace{10pt} \noindent \textbf{\sffamily #1}}
 
\newcommand{\observe}[1]
{\vspace{6pt} \noindent $\triangleright$ {\sffamily Observation.}{\it #1} \vspace{6pt}}


\newcommand{\cK}{\mathcal{K}}
\newcommand{\cS}{\mathcal{S}}
\newcommand{\cI}{\mathcal{I}}
\newcommand{\cG}{\mathcal{G}}
\newcommand{\cC}{\mathcal{C}}
\newcommand{\cD}{\mathcal{D}}
\newcommand{\cF}{\mathcal{F}}
\newcommand{\bu}{\mathbf{u}}
\newcommand{\bb}{\mathbf{b}}
\newcommand{\bp}{\mathbf{p}}
\newcommand{\bR}{\mathbf{R}}

\newcommand{\eps}{\varepsilon}

\renewcommand{\leq}{\leqslant}
\renewcommand{\geq}{\geqslant}

\newcommand{\true}{\textsc{true}}

\newcommand{\graph}{G}

\newcommand{\skb}[1]{{ #1}}



\title{The Dominating Set Problem in Geometric Intersection Graphs\footnote{This research was
supported by the Netherlands Organization for Scientific Research (NWO) under
project no.~024.002.003.}}


\author[1]{Mark de Berg}
\author[2]{S\'andor Kisfaludi-Bak}
\author[3]{Gerhard Woeginger}
\affil[1]{Department of Mathematics and Computer Science, TU Eindhoven, Eindhoven, the Netherlands}
\affil[2]{Department of Mathematics and Computer Science, TU Eindhoven, Eindhoven, the Netherlands}
\affil[3]{Department of Computer Science, RWTH Aachen University, Aachen, Germany}
\authorrunning{M. de Berg, S. Kisfaludi-Bak, and G. Woeginger}

\Copyright{Mark de Berg, S\'andor Kisfaludi-Bak, and Gerhard Woeginger}

\keywords{Broadcast, dominating set, unit disk graphs, range assignment}

 \EventShortTitle{ArXiv version}
 \EventAcronym{ArXiv}

\begin{document}

\maketitle


\begin{abstract}
We study the parameterized complexity of dominating sets in geometric
intersection graphs.
\begin{itemize}
\item In one dimension, we investigate intersection graphs induced by
translates of a fixed pattern Q that consists of a finite number of intervals
and a finite number of isolated points. We prove that Dominating Set on such
intersection graphs is polynomially solvable whenever Q contains at least one
interval, and whenever Q contains no intervals and for any two point pairs in
Q the distance ratio is rational. The remaining case where Q contains no
intervals but does contain an irrational distance ratio is shown to be NP-
complete and contained in FPT (when parameterized by the solution size).
\item In two and higher dimensions, we prove that Dominating Set is contained
in W[1] for intersection graphs of semi-algebraic sets with constant
description complexity. This generalizes known results from the literature.
Finally, we establish W[1]-hardness for a large class of intersection graphs.
\end{itemize}
\end{abstract}

\section{Introduction}

A \emph{dominating set} in a graph $\graph=(V,E)$ is a subset $D\subseteq V$
of vertices  such that every node in $V$ is either contained in $D$ or has
some neighbor in~$D$.  The decision version of the dominating set problem asks
for a given graph $\graph$ and a given integer~$k$, whether $\graph$ admits a
dominating set of size at most~$k$.  Dominating set is a popular and classic
problem in algorithmic graph theory.  It has been studied extensively for
various graph classes; we only mention that it is  polynomially solvable on
interval graphs, strongly chordal graphs, permutation graphs and
co-comparability graphs and that it is \NP-complete on bipartite graphs,
comparability graphs, and split graphs. We refer the reader to the book
\cite{HaHeSl1998} by Hales, Hedetniemi and Slater for lots of comprehensive 
information on dominating sets.

Dominating set is also a model problem in parameterized complexity, as it is
one of the few natural problems known to be \Wtwo-complete (with the solution
size $k$ as natural  parameterization); see~\cite{Downey95}.
In the parameterized setting, dominating set on a concrete graph class
typically is either in \P, \FPT, \Wone-complete, or \Wtwo-complete.
(Note that the problem cannot be  on higher levels of the W-hierarchy, as it
is \Wtwo-complete on general graphs.)

In this paper we study the dominating set problem on geometric intersection
graphs:  Every vertex in $V$ corresponds to a geometric object in $\Reals^d$,
and there is an edge between two vertices if and only if the corresponding
objects intersect. Well-known graph classes that fit into this model are
interval graphs and unit disk graphs.  In $\Reals^1$, Chang~\cite{Chang98} has given a polynomial time algorithm for dominating set in interval graphs and Fellows, Hermelin, Rosamond and Vialette~\cite{Fellows09} have proven
\Wone-completeness for $2$-interval graphs (where the geometric
objects are pairs of intervals). In $\Reals^2$, Marx~\cite{Marx06} has shown that  dominating set is \Wone-hard
for unit disk graphs as well as for unit square graphs. For unit
square graphs the problem is furthermore known to be contained in
\Wone~\cite{Marx06}, whereas  for unit disk graphs this was previously not
known.

\mypara{Our contribution.}{
We investigate the dominating set problem on intersection graphs of 1- and 2-dimensional objects, 
thereby shedding more light on the borderlines between \P and \FPT and \Wone and \Wtwo.}

For 1-dimensional intersection graphs, we consider the following setting.
There is a fixed \emph{pattern}~$Q$, which consists of a finite number of
points and a finite number of closed intervals (specified by their endpoints).
The objects corresponding to the vertices in the intersection graph simply are
\skb{a finite number of}  translates of  this fixed pattern~$Q$.  More
formally, for a real number $x$ we define $Q(x):= x+Q$ to be the pattern $Q$
translated by $x$\skb{, and for the input $\{x_1,\dots, x_n\}$, we consider
the intersection graph defined by the objects $\{x_1+Q,\dots, x_n+Q\}$}. The
class of unit interval graphs arises by choosing $Q=[0,1]$. Our model of
computation is the word RAM model, where real numbers are restricted to a
field $K$  which is a finite extension of the rationals.

\begin{remark}[Machine representation of numbers]
As finite extensions of $\mathbb{Q}$ are finite dimensional vector spaces over
$\mathbb{Q}$,  there exists a basis $b_1,\ldots,b_k$ with $k=[K:\mathbb{Q}]$,
so that any real $x\in K$ is  representable in the form
$x=q_1b_1+q_2b_2+\dots+q_kb_k$ for some $q_1,\ldots,q_k\in\mathbb{Q}$.  As $k$
is fixed, any arithmetic operation that takes $O(1)$ steps on the rationals
will also take $O(1)$ steps on elements of $K$.
\end{remark}

We define the \emph{distance ratio} of two point pairs $(x_1,x_2), (x_3,x_4)
\in \Reals\times \Reals$ as $\frac{|x_1-x_2|}{|x_3-x_4|}$.
We derive the following complexity classification for \textsc{$Q$-Intersection
 Dominating Set}.

\begin{theorem}\label{thm:onedim}
\textsc{$Q$-Intersection Dominating Set} has the following complexity:
\begin{enumerate}[label=(\roman*)]
\item It is in \P if the pattern $Q$ contains at least one interval.
\item It is in \P if the pattern $Q$ does not contain any intervals, and if for
 any two point pairs in $Q$ the distance ratio is rational.
\item It is \NP-complete and in \FPT if pattern $Q$ is a finite point set 
which has at least one irrational distance ratio.
\end{enumerate}
\end{theorem}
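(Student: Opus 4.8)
The plan is to prove the three parts separately, with part (iii) further split into the (trivial) \NP-membership, the \NP-hardness, and the \FPT claim.

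\textit{Part (i).} Here the plan is a polynomial-time dynamic program over the input points sorted by coordinate. The structural feature to exploit is that adjacency is governed by the difference set $D := Q-Q$ (the objects $x_i+Q$ and $x_j+Q$ meet iff $x_i-x_j \in D$), which is a union of $O(|Q|^2)$ closed intervals (``bands''), and that --- precisely because $Q$ contains an interval of some length $\delta>0$ --- one of these bands is an interval $[-\delta,\delta]$ around the origin. After scaling we may treat $\delta$ and $\mathrm{diam}(Q)$ as constants. The band around $0$ makes the graph ``locally interval-like'': any set of input points inside a window of length $\delta$ is a clique, and points that are close together have almost identical closed neighbourhoods. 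I would therefore sweep a frontier from left to right, deciding for each point whether it enters the partial solution and finalizing a point once the frontier has passed its entire interaction range (of constant length). The work is to show that the sweep state can be compressed to size $f(Q)$: intuitively, modulo the equivalence induced by the band structure the only relevant information --- which bands of each pending point are already ``covered from the left'' and which solution vertices near the frontier can still serve future points --- has constant complexity, even though the number of input points inside the window is unbounded. This generalizes Chang's algorithm for interval graphs; proving the state is genuinely bounded is the main technical content of this part.

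\textit{Part (ii).} If every distance ratio in $Q$ is rational, then all pairwise distances in $Q$ are rational multiples of a fixed value, so after a translation and a scaling --- and the corresponding affine transformation of the input coordinates, which preserves the intersection graph --- we may assume $Q\subseteq\{0,1,\dots,M\}\subseteq\Ints$ for a constant $M$. Then $D=Q-Q$ is a fixed finite set of integers, so two input points are adjacent only if their difference is an integer; hence the intersection graph is the disjoint union, over the (at most $n$) residue classes of the input coordinates modulo~$1$, of its induced subgraphs on each class. Within one class, writing each coordinate as (integer part) $+$ (common fractional part), adjacency depends only on the integer parts together with the fixed set $D$. On such integer-difference graphs, Dominating Set is solved by a left-to-right dynamic program over integer positions whose state records, for the last $O(M)$ positions, which are occupied by an input vertex, which lie in the partial solution, and which occupied ones are already dominated. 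This is $2^{O(M)}=O(1)$ states, transitions take $O(1)$ time, and any run of more than $O(M)$ consecutive empty positions decouples the instance, so the total running time is polynomial.

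\textit{Part (iii).} Membership in \NP is clear (a dominating set of size $\le k$ is a polynomial certificate, checkable in polynomial time since adjacency is computable from the $x_i$ and $Q$). For \FPT, observe that when $Q$ is a finite point set, $x_i+Q$ meets $x_j+Q$ only if $x_j\in x_i+(Q-Q)$, a set of at most $|Q|^2$ reals, so the intersection graph has maximum degree at most $|Q|^2-1=O(1)$; Dominating Set on bounded-degree graphs is \FPT (an instance with a dominating set of size $k$ has at most $k\,|Q|^2$ vertices, so one rejects large instances and brute-forces small ones in $2^{O(k\log k)}\cdot n^{O(1)}$ time). For \NP-hardness, the idea is to encode two-dimensional structure on the line using the irrational ratio: pick $\delta_1,\delta_2\in Q-Q$ with $\delta_2/\delta_1\notin\mathbb{Q}$ (these exist by hypothesis), so that $(p,q)\mapsto p\delta_1+q\delta_2$ is an injection $\Ints^2\hookrightarrow\Reals$. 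Placing input points at such images turns the intersection graph into a ``difference graph'' on a finite subset of $\Ints^2$, where $(p,q)\sim(p',q')$ iff $(p-p',q-q')$ lies in the fixed finite symmetric set $S:=\{(a,b)\in\Ints^2: a\delta_1+b\delta_2\in Q-Q\}$, which contains $(0,0)$ and $\pm e_1,\pm e_2$. I would then reduce from Dominating Set on planar graphs of maximum degree~$3$: take an orthogonal grid drawing, scale it up so that distinct features are more than $\mathrm{diam}(S)$ apart in $\Ints^2$, replace each edge by a subdivided path, and attach constant-size vertex gadgets. Composition with the trivial map $(p,q)\mapsto p\delta_1+q\delta_2$ then gives the hardness.

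The main obstacle in part (iii) is the set $S$ of extra adjacency vectors: unlike in a textbook grid-graph reduction we do not get to choose $S$ (it is forced by $Q$, and may contain many short-range vectors beyond $\pm e_1,\pm e_2$). Along a straight run of a subdivided path these extra vectors merely make the path behave like a fixed power of a path, so the domination number of each subdivided edge is governed by a fixed rational ratio; the point is to choose the number of subdivisions per edge congruent to the appropriate period so that this contributes a predictable, additive amount and the reduction can be made to preserve domination exactly. The bends (a constant number per edge) and the constant-size vertex gadgets are then handled by an explicit, finite case analysis. In contrast, the integer reduction in (ii) and the bounded-degree, \FPT and \NP-membership arguments in (iii) are routine, and the main difficulty in (i) is, as noted, establishing that the sweep state has bounded size.
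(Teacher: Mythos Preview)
Your treatment of part~(ii) and of the \NP-membership and \FPT claims in part~(iii) is correct and essentially matches the paper (the paper reduces~(ii) to~(i) by attaching a tiny interval $[0,1/3]$ to the integerized pattern, while you run a direct DP; for \FPT the paper branches on an undominated vertex rather than kernelizing, but the underlying bounded-degree observation is identical).

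For part~(i) there is a real gap. Your sweep is the right framework, but the claim that the state ``has constant complexity, even though the number of input points inside the window is unbounded'' is exactly the heart of the matter, and you do not prove it. The difficulty is not the number of \emph{input} points in the window but the number of \emph{solution} points: nothing you wrote rules out an optimal dominating set that packs many translates into a short interval, and then ``which solution vertices near the frontier can still serve future points'' is not constant-size information. The paper closes this with a clean structural lemma: if the longest interval in $Q$ has length~$1$ and $Q$ has span~$w$, then every minimum dominating set contains at most $3w$ translates whose left endpoints fall in any window of length~$w$. The proof compares to the unit-interval subgraph induced by a fixed longest interval of $Q$ and uses that unit-interval graphs admit non-overlapping optimal dominating sets. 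With this bound in hand the DP state is simply ``which $\le 3w$ translates starting in the current width-$w$ window are in the solution'', giving an $O(n^{6w+4})$ algorithm. Your neighbourhood-equivalence idea might be salvageable, but as written it does not supply the bound.

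For the \NP-hardness in part~(iii), your plan is in the right direction but takes the hard road. The paper sidesteps your ``main obstacle'' (the uncontrolled adjacency set $S$) entirely by choosing the embedding more carefully: after rescaling $Q$ to span~$1$, take the smallest irrational point $x^*\in Q$, show that only finitely many integers $a$ satisfy $(ax^*+Q)\cap(\Ints+Q)\neq\emptyset$, and let $y^*=a'x^*$ for the largest such~$a'$. Then the intersection graph on $\{jy^*+k+Q:(j,k)\in\Ints^2\}$ is \emph{exactly} the infinite triangular grid (neighbours are $(\pm1,0),(0,\pm1),(1,-1),(-1,1)$, nothing else). Hence every induced triangular-grid subgraph is a $Q$-intersection graph, and one reduces from Dominating Set on induced triangular-grid graphs (whose \NP-hardness is established separately from \textsc{Planar 3-SAT}). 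This removes the need for subdivision-period bookkeeping, bend gadgets, and the case analysis you anticipate.
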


In Lemma~\ref{lem:large_pattern_proof} we show that any graph can be
obtained as a  1-dimensional pattern intersection graph for a suitable choice
of pattern $Q$. Consequently \textsc{$Q$-Intersection Dominating Set} is
\Wtwo-complete if the pattern $Q$  is part of the input.

\medskip 
For 2-dimensional intersection graphs, our results are inspired by a
question  that was not resolved in \cite{Marx06}: \emph{``Is dominating set on
unit disk graphs  contained in \Wone?''} We answer this question affirmatively
(and thereby fully settle  the complexity status of this problem).  Our result
is in fact far more general: We show that dominating set is contained in \Wone
whenever the geometric objects in the intersection graph come from a family of
semi-algebraic  sets that can be described by a constant number of parameters.
We also show that this restriction to shapes of constant-complexity is
crucial, as dominating  set is \Wtwo-hard on intersection graphs of convex
polygons with a polynomial number of  vertices.  On the negative side, we
generalize the \Wone-hardness result of Marx~\cite{Marx06} by showing that for
any non-trivial simple polygonal pattern $Q$, the corresponding version  of
dominating set is \Wone-hard.


\section{1-dimensional patterns}

In this section, we study the \textsc{$Q$-Intersection Dominating Set}
problem in $\Reals^1$.  If $Q$ contains an unbounded interval, then all
translates are intersecting; the intersection graph is a clique and the
minimum dominating set is a single vertex. In what follows, we assume that
all intervals in $Q$ are bounded. We define the \emph{span} of $Q$ to be the
distance between its leftmost and rightmost point. We prove
Theorem~\ref{thm:onedim} by studying each claim separately.

\begin{lemma}\label{lem:pattern_with_interval}
\textsc{$Q$-Intersection Dominating Set} can be solved in $O(n^{6w+4})$ time
if $Q$ contains at least one interval, where $w$ is the ratio of the span of
 $Q$ and the length of the longest interval in $Q$.
\end{lemma}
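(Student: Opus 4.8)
The plan combines a normalization of $Q$, a decomposition of the line into unit cells that makes the intersection graph ``path\nobreakdash-like'', a structural lemma bounding the per\nobreakdash-cell size of an optimal solution, and a left\nobreakdash-to\nobreakdash-right sweep dynamic program. First I would normalize: scaling so that the longest interval of $Q$ has length $1$, and translating so that $Q\subseteq[0,w]$ with $0$ and $w$ the extreme points of $Q$; then every object $Q(x_i)$ lies inside $[x_i,x_i+w]$ and contains a unit sub-interval $[x_i+\alpha,x_i+\alpha+1]$ with $[\alpha,\alpha+1]\subseteq Q$. Partition $\Reals$ into the unit cells $C_t:=[t,t+1)$, $t\in\Ints$; at most $n$ of them are nonempty, and I relabel the nonempty ones $C_1,\dots,C_N$ from left to right. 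Two facts are used throughout. \emph{(F1)} Anchors $x_i,x_j$ in one cell have $|x_i-x_j|<1$, so the unit sub-intervals of $Q(x_i)$ and $Q(x_j)$ overlap; hence the vertices of each cell form a clique. \emph{(F2)} With $W:=\lceil w\rceil$, anchors in cells whose indices differ by at least $W+1$ have $|x_i-x_j|>w$, so $[x_i,x_i+w]$ and $[x_j,x_j+w]$ are disjoint and the objects are nonadjacent; therefore $C_t\subseteq N[v]\subseteq C_{t-W}\cup\cdots\cup C_{t+W}$ for every vertex $v$ of cell $C_t$.

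The heart of the proof is a structural lemma: some minimum dominating set $D^{*}$ places at most a constant number $\rho$ of vertices in each cell. I would prove it by an exchange argument, starting from an arbitrary minimum dominating set $D$. Write $Q-Q=[-\ell,\ell]\cup\bigcup_j K_j$, where $[-\ell,\ell]$ (with $\ell\geq1$ by F1) is the component containing the origin and each $K_j$ is an interval or a single point inside $[-w,-1)\cup(1,w]$. Inside a cell $C_t$, keep only the leftmost and rightmost vertices of $D\cap C_t$. A sliding-interval computation shows these two already dominate everything that $D\cap C_t$ dominated via $[-\ell,\ell]$ or via any component $K_j$ of length $\geq1$: each such dominated set is a union of translates of a fixed interval of length $\geq1$ whose left endpoints lie within $1$ of one another, and hence equals the union of its two extreme translates. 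The only vertices possibly left undominated are those that a removed (interior) vertex of $D\cap C_t$ dominated through a \emph{short} component $K_j$ (length $<1$); an arithmetic check places all such ``endangered'' vertices belonging to a fixed $K_j$ inside an interval of length $<1$, so by F1 they form a clique, which a single vertex repairs. A charging argument — billing the repair for $K_j$ to an interior vertex that was removed, and showing each cell receives only boundedly many repairs — then yields $D^{*}$ without increasing the total size.

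Given the lemma, the algorithm is a sweep dynamic program over $C_1,\dots,C_N$. Only the $\leq n^{\rho}$ subsets of size $\leq\rho$ of each cell are ever used as the restriction of the solution to that cell, and by F2 whether a vertex of $C_t$ is dominated depends only on the solution vertices selected in $C_{t-W},\dots,C_{t+W}$. So the DP state records, for the window of $2W+1$ cells around the current sweep position, the selected solution vertices together with one bit per window cell indicating whether all of that cell's vertices are already dominated; when the sweep moves past $C_{t+W}$, cell $C_t$ is finalized by checking that each of its vertices is dominated by the recorded selection. There are $n^{O(w)}$ states, each transition takes $\mathrm{poly}(n)$ time, and a careful count of the state — at most $\rho$ vertices in each of $2W+1$ cells plus $O(1)$ boundary data — gives the running time $O(n^{6w+4})$.

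The dynamic program is routine; the main obstacle is the structural lemma, and this is precisely where the two hypotheses enter: that $Q$ contains an interval (so F1 holds), and that all objects are translates of the \emph{same} $Q$ (so a single difference set $Q-Q$ governs all adjacencies). The delicacy is that $Q-Q$ may have very many components that are points or very short intervals, so a careless exchange could make $\rho$ — hence the exponent — grow with the combinatorial complexity of $Q$ rather than with $w$. The way out is to use F1 twice: once to see each cell is a clique dominated by one vertex, and once to see that the image of a cell under a short or point component of $Q-Q$ again lies in an interval of length $<1$ and hence is a clique dominated by one well-placed vertex; designing the charging so that $\rho$ stays bounded in terms of $w$ alone is the technical crux.
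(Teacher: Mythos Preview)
Your global plan---normalize so the longest interval of $Q$ has unit length, prove a structural lemma bounding how many vertices of an optimal solution can sit in a short window, and then run a left-to-right sweep DP---matches the paper's exactly. The DP part is routine in both. The real content is the structural lemma, and there your argument and the paper's diverge substantially.

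The paper's structural lemma says: in any minimum dominating set, at most $3w$ left endpoints fall in any window of width~$w$. Its proof is a one-line exchange that \emph{completely ignores} the combinatorics of $Q$. Namely, fix one longest interval $I\subseteq Q$ and let $U$ be the set of translates of $I$ contained in the patterns whose anchors lie in a width-$3w$ window. The unit-interval graph on $U$ has a dominating set of at most $3w$ pairwise disjoint intervals; the corresponding $\leq 3w$ patterns already dominate every pattern anchored in the window (since $\cG(U)$ is a spanning subgraph of the relevant induced subgraph). If more than $3w$ solution vertices ever sat in a width-$w$ subwindow, swapping them for these $\leq 3w$ patterns would shrink the solution. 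No analysis of $Q-Q$, no short/long components, no charging.

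Your route through the component decomposition of $Q-Q$ is where the proposal becomes shaky. You correctly observe that the endangered vertices for a single short component $K_j$ lie in an interval of length $<1$ and hence form a clique repairable by one vertex. But the number of short or point components of $Q-Q$ is \emph{not} bounded by $w$: already $Q=[0,1]\cup\{p_1,\dots,p_m\}$ with well-spaced $p_i\in(2,w]$ produces $\Theta(m^2)$ isolated points in $Q-Q$, and $m$ is independent of $w$. So a single cell could require many repairs, and your ``charging to a removed interior vertex'' is not specified tightly enough to prevent $\rho$ from growing with the description complexity of $Q$. Without $\rho=O(1)$ per unit cell, your DP state has $\rho(2W+1)$ vertices and you get $n^{O(\rho w)}$, not $n^{O(w)}$; the claimed $O(n^{6w+4})$ bound does not follow from what you wrote.

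The fix is exactly the paper's trick: drop the $Q-Q$ analysis and instead dominate via the unit-interval subgraph. That immediately gives $\leq 3w$ solution vertices per $w$-window (hence per unit cell, if you prefer your decomposition), and with $w$-cells in the DP the state consists of one set of size $\leq 3w$, yielding the $O(n^{6w+4})$ bound directly.
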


Note that since $Q$ is a fixed pattern, the value of $w$ does not depend on
the input size and so Lemma~\ref{lem:pattern_with_interval} implies Theorem
~\ref{thm:onedim}(i). We translate $Q$ so that its leftmost endpoint
lies at the origin, and we rescale $Q$ so that its longest interval has
length $1$. Consider an intersection graph $\cG$ of a set of translates of
$Q$. The vertices of $\cG$ are $Q(x_i)$ for the given values $x_i$. We call
$x_i$ the \emph{left endpoint} of $Q_i$. Let $+$ also denote the Minkowski
sum of sets: $A+B=\{a+b\;|\;a\in A, b\in B\}$. If $A$ or $B$ is a singleton,
then we omit the braces, i.e., we let $a+B$ denote $\{a\}+B$. In order to prove Lemma~\ref{lem:pattern_with_interval}, we need the following lemma first.

\begin{lemma}\label{lem:sparse_dom_set}
Let $D\subseteq V(\cG)$ be a minimum dominating set and let $X(D)$ be the set
of left endpoints corresponding to the patterns in $D$. Then for all $y\in
\Reals$ it holds that $|X(D)\cap\civ{y,y+w}|\leq 3w$.
\end{lemma}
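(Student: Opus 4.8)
\noindent\emph{Proof plan.}
The plan is to argue by contradiction through a local exchange, spending minimality of $D$ twice: once to equip every pattern of $D$ with a private neighbour, and once (implicitly) when comparing sizes. First I record the two intersection facts that make one‑dimensional patterns tractable. By the normalisation preceding the lemma we have $Q\subseteq[0,w]$ and $Q$ contains an interval $I$ of length $1$, so in particular $w\geq 1$; two translates $Q(s),Q(t)$ intersect iff $s-t\in Q-Q$, where $[-1,1]\subseteq Q-Q\subseteq[-w,w]$. Consequently a pattern with left endpoint $p$ dominates a pattern with left endpoint $q$ iff $q-p\in Q-Q$; in particular it dominates \emph{all} patterns with left endpoint in $[p-1,p+1]$ (their copies of $I$ overlap) and \emph{no} pattern with left endpoint outside $[p-w,p+w]$.

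Now suppose, for contradiction, that $m:=|X(D)\cap[y,y+w]|>3w$, and let $D'\subseteq D$ be the $m$ patterns with left endpoint in $[y,y+w]$. By the ``no pattern outside $[p-w,p+w]$'' fact, every pattern dominated by a member of $D'$ has left endpoint in $W:=[y-w,y+2w]$, an interval of length $3w$. Hence deleting $D'$ from $D$ can only un‑dominate patterns in the set $Z$ of vertices dominated by $D'$ but not by $D\setminus D'$, and all of those have left endpoint in $W$. Since $D$ is minimum it is minimal, so each $Q_i\in D'$ has a private neighbour $u_i$ with $N[u_i]\cap D=\{Q_i\}$; a short check gives that the $u_i$ are pairwise distinct, that each $u_i\in Z$, and that, writing $z_i$ for the left endpoint of $u_i$, we have $|z_i-x_j|>1$ for every $Q_j\in D\setminus\{Q_i\}$ (else $u_i$ would intersect $Q_j$, using $[-1,1]\subseteq Q-Q$).

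Next I would re-dominate $Z$ cheaply by a greedy one‑dimensional cover: repeatedly take the leftmost not‑yet‑covered point $p\in X(Z)$, put the corresponding vertex of $Z$ into a set $R$ (which dominates every vertex with left endpoint in $[p-1,p+1]$), and mark those vertices covered. Consecutive chosen left endpoints differ by more than $1$ and all lie in the length‑$3w$ interval $W$, so $|R|\leq\lceil 3w\rceil$; moreover $R$ is disjoint from $D\setminus D'$ since no member of $D\setminus D'$ lies in $Z$. Then $(D\setminus D')\cup R$ is again a dominating set, of size $|D|-m+|R|$, and it remains only to show $|R|<m$.

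If $3w$ is an integer this is immediate, since then $m\geq 3w+1>\lceil 3w\rceil\geq|R|$. The delicate case is $3w\notin\Ints$, where the crude count gives only $|R|\leq\lfloor 3w\rfloor+1\leq m$; closing this last gap is the main obstacle, and it is where the private‑neighbour data has to be spent. The point is that the greedy cover can need $\lfloor 3w\rfloor+1$ centres only when $X(Z)$ is a nearly uniform set of spacing slightly above $1$ that occupies essentially all of $W$; but the condition that each $z_i$ avoids the open unit‑intervals around every other left endpoint of $D'$ forces the private‑neighbour endpoints into the two outer parts $[y-w,y-1]$ and $[y+w+1,y+2w]$ of $W$ whenever $X(D')$ is spread across $[y,y+w]$, and quantifying this tension should let one save a single centre and obtain $|R|\leq m-1$, a contradiction. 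All the remaining ingredients — the two intersection observations, the standard properties of private neighbours, and the greedy interval cover — are routine; the technical heart is reconciling the $\lceil 3w\rceil$ covering estimate with the integrality of $m>3w$.
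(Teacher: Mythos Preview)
Your exchange argument is the same core idea as the paper's, but you have added an unnecessary layer (private neighbours) and then got stuck on an integrality issue that the paper simply does not worry about.

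The paper's route is shorter. Let $H$ be the set of \emph{all} input patterns with left endpoint in $W=[y-w,y+2w]$, and let $U$ be the set of length-$1$ intervals obtained by taking the fixed unit interval $I\subseteq Q$ inside each pattern of $H$. Then $\cG(U)$ is a unit interval graph and a spanning subgraph of $\cG(H)$; the standard observation that a unit interval graph has a minimum dominating set of pairwise non-overlapping intervals immediately gives a dominating set $D_U$ of $\cG(U)$---hence of $\cG(H)$---of size at most $3w$, since non-overlapping unit intervals whose left endpoints all lie in a window of length $3w$ can number at most $3w$. Swapping the part of $D$ inside the window for the patterns corresponding to $D_U$ yields a strictly smaller dominating set, the desired contradiction. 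No private neighbours enter.

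Your greedy cover of $Z$ is precisely this unit-interval argument in disguise, so your bound $|R|\leq\lceil 3w\rceil$ is the paper's $|D_U|\leq 3w$ up to rounding. The ``delicate case $3w\notin\Ints$'' you agonise over is not a genuine obstacle: the paper simply negates $\leq 3w$ as $\geq 3w+1$ and moves on, and since the lemma is only ever used to bound the subproblem size in the dynamic program of Lemma~\ref{lem:pattern_with_interval}, any $O(w)$ bound suffices. Your plan to ``save one centre'' via private-neighbour geometry is both unnecessary and, as sketched, not a proof: the conclusion that the $z_i$ are forced into the outer parts of $W$ only holds when $X(D')$ is densely spread across $[y,y+w]$, and you give no argument for the complementary case. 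Drop the private-neighbour apparatus entirely; your steps through the greedy cover already reproduce the paper's argument.
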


\begin{proof}
We prove this lemma first for unit interval graphs (where $Q$ consists of a
single interval). The following observation is easy to prove.

\observe{
In any unit interval graph there is a minimum dominating set whose intervals
do not overlap.}{}

Notice that the lemma immediately follows from this claim in case of unit
interval graphs since then $|X(D)\cap\civ{y,y+1}|\leq 1 < 3=3w$. Let $Q$ be
any other pattern, and suppose that $|X(D)\cap\civ{y,y+w}| \geq 3w+1$. The
patterns starting in $\civ{y,y+w}$ can only dominate patterns with a left
endpoint in $\civ{y-w,y+2w}$, a window of width $3w$. Let $H$ be the set of
patterns starting in $\civ{y-w,y+2w}$ (see Figure~\ref{fig:windowpattern}. Let
$I$ be a unit interval of $Q$, and let $U$ the set of unit intervals that are
the translates of $I$ in the patterns of $H$. Notice that $X(U)$ is a point
set that is also in a window of length $3w$. By the claim above, we know that
the interval graph $\cG(U)$ defined by $U$ has a dominating set that contains
non-overlapping intervals, in particular, a dominating set $D_U$ of size at
most $3w$. Since $\cG(U)$ corresponds to a spanning subgraph of $\cG(H)$, the
patterns $D_U^H$ corresponding to $D_U$ in $H$ form a dominating set of
$\cG(H)$. Thus, $(D \setminus H) \cup D_U^H$ is a dominating set of our
original graph that is smaller than $D$, which contradicts the minimality of
$D$.
\end{proof}

\begin{figure}
\begin{center}
\includegraphics[scale=0.75]{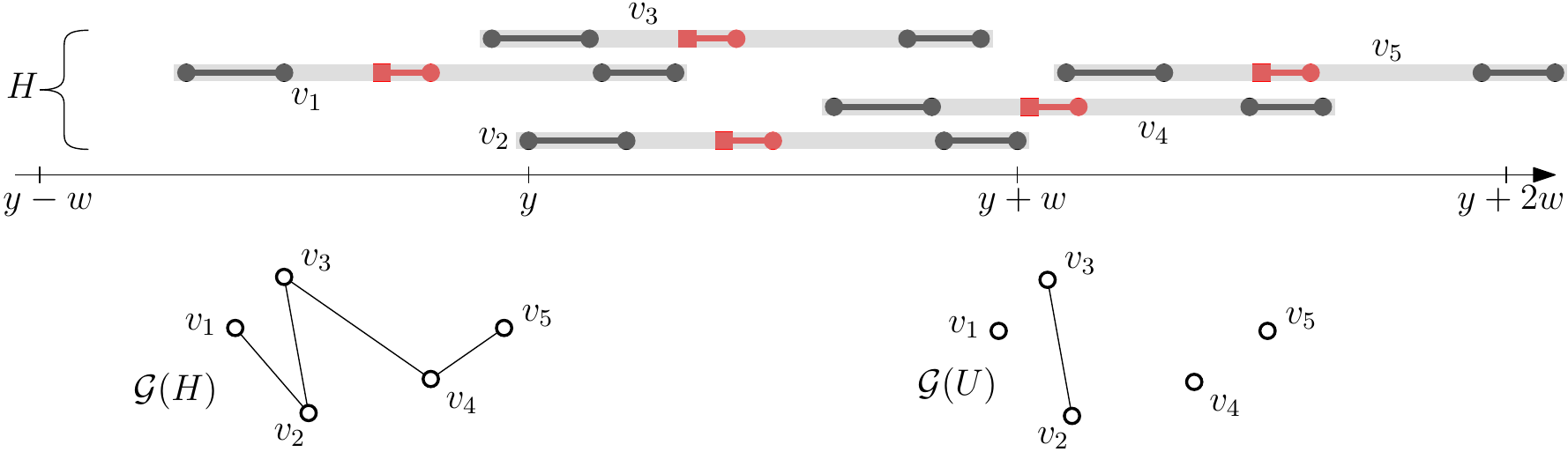}
\end{center}
\caption{Patterns in a window $\civ{y-w,y+2w}$. Intervals of $U$ are red.}
\label{fig:windowpattern}
\end{figure}

We can now move on to the proof of Lemma~\ref{lem:pattern_with_interval}.
\begin{proof}
We give a dynamic programming algorithm. We translate our input so that the
left endpoint of the leftmost pattern is $0$. Moreover, we can assume that
the graph induced by our pattern is connected, since we can apply the
algorithm to each connected component separately. The connectivity implies
that the left endpoint of the rightmost pattern is at most $(n-1)w$.
Let $0<k\leq n$ be an integer and let $\cG(k)$ be the intersection graph
induced by the patterns with left endpoints in $\civ{0,kw}$. Let $\cI(k)$ be
the set of input patterns with left endpoints in $\roiv{(k-1)w, kw}$ and let
$S\subseteq\cI(k)$. Let $A(k,S)$ be the size of a minimum dominating set $D$
of $\cG(k)$ for which $D \cap \cI(k)=S$. By Lemma~\ref{lem:sparse_dom_set} it
follows that $|S|\leq 3w$.

The following recursion holds for $A(k,S)$ if we define $A(0,S):=0$:
\[
A(k,S)=\min\Big\lbrace  A(k-1,S') + |S| \;\Big| 
S' \subset \cI(k-1),\,
|S'|\leq 3w,\,
S\cup S' \text{ dominates } \cI(k) \Big\rbrace.
\]
\skb{The inequality ``$\leq$'' is easy to see, we are only minimizing over the sizes of feasible dominating sets of $\cG(k)$. For the other direction (``$\geq$''), Lemma~\ref{lem:sparse_dom_set} implies that there is a minimum dominating set containing at most $3w$ left endpoints from both $\cI(k-1)$ and $\cI(k)$, therefore its size is $A(k-1,S') + |S|$ for some $S' \subset \cI(k-1),\,|S'|\leq 3w$ that together with $S$ dominates $\cI(k)$.}
The number of subproblems for a fixed value of $k$ is $\sum_{j=0}^{3w}
\binom{n}{j} = O(n^{3w})$; thus the number of subproblems is $O(n^{3w+1})$.
Computing the value of a subproblem requires looking at $O(n^{3w+1})$
potential subsets $S'$, and $O(n^2)$ time is sufficient to check whether
$S\cup S'$ dominates $\cI(k)$. Overall, the running time of our algorithm is
$O(n^{6w+4})$.
\end{proof}

\begin{lemma}\label{lem:rational_point_pattern}
If $Q$ is a point pattern so that the distance ratios of any two point pairs of $Q$ are 
rational, then \textsc{$Q$-Intersection Dominating Set} can be solved in polynomial time.
\end{lemma}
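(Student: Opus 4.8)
The plan is to use the rational-ratio hypothesis to rescale the whole instance onto the integer line, where the intersection relation becomes ``the left endpoints differ by an element of a fixed finite set'', and then to solve the resulting covering problem by a sweep whose state has constant size.

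First I would normalise the pattern. If $Q$ is a single point the intersection graph is a disjoint union of cliques and there is nothing to do, so assume $Q=\{p_0,\dots,p_m\}$ with $m\ge 1$ and $0=p_0<p_1<\dots<p_m$. For each $i$ the distance ratio of the pairs $(p_i,p_0)$ and $(p_1,p_0)$ equals $p_i/p_1$, which is rational by hypothesis, so some positive integer $N$ makes all the numbers $Np_i/p_1$ integral. The affine bijection $t\mapsto (N/p_1)t$ of $\Reals$ leaves every intersection graph of translates of $Q$ unchanged, while sending $Q$ to an integer pattern $Q'\subseteq\{0,1,\dots,a\}$ with $a:=\max Q'$ and sending each input point $x_i\in K$ to $x_i':=(N/p_1)x_i\in K$. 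Now $x_i'+Q'$ and $x_j'+Q'$ intersect iff $x_i'-x_j'\in\Delta:=Q'-Q'$, a fixed symmetric subset of $\{-a,\dots,a\}$ containing $0$. Since $\Delta\subseteq\Ints$, translates whose left endpoints differ by a non-integer are never adjacent, so the intersection graph is the disjoint union of its induced subgraphs on the classes of the equivalence $x_i'-x_j'\in\Ints$ (which is easy to test in the word-RAM model over $K$); I would treat each class separately, and inside one class subtract a fixed representative to make all coordinates integers $z_1,\dots,z_t\in\Ints$, with $z_i$ adjacent to $z_j$ iff $z_i-z_j\in\Delta$.

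Next I would replace domination by a covering condition. All translates with a common left endpoint $p$ are twins, since the closed neighbourhood of such a vertex consists exactly of the translates whose left endpoint lies in $p+\Delta$ and thus depends only on $p$; hence a minimum dominating set uses at most one vertex per occupied position (a duplicate can be deleted, keeping the set dominating and making it smaller). Thus the problem becomes: given the set $P=\{z_1,\dots,z_t\}\subseteq\Ints$ of occupied positions, find a minimum $D\subseteq P$ such that every $p\in P$ has some $d\in D$ with $p-d\in\Delta$ — any such ``$\Delta$-cover'' lifts back to a dominating set of the same size, so the two optima coincide. Because a position can only be covered from within distance $a$ of it, this instance decomposes over any run of more than $2a$ consecutive unoccupied integers, so after sorting $P$ I would break it into blocks, each block spanning $O(a\,t')$ integers where $t'$ is the number of occupied positions it contains; these spans sum to $O(n)$ over all blocks and all classes. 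On each block I run a left-to-right sweep whose state records the tentative membership pattern of $D$ on the last $2a+1$ integers (one bit per integer, a $1$ allowed only at an occupied position) subject to the invariant that every occupied position already more than $a$ to the left of the front is covered; when the sweep reaches $p+a$ the entire covering window $[p-a,p+a]$ of $p$ has been fixed, so I check whether $p$ is covered and discard the branch otherwise, finalising the last positions at the right end. Summing the minimum surviving cost over blocks and classes gives the size of a minimum dominating set.

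I expect the only real obstacle to be the first step: the input coordinates live in $K$ and may be irrational, so one cannot literally work on a bounded integer grid — rescaling pins down the pattern but leaves the points arbitrary, and the right move is to split off the residue classes modulo $\Ints$ before integralising. After that, the constant size of the sweep state rests entirely on the twin observation (one bit per position), and keeping the swept range polynomial rests on the gap decomposition; with $a$ a constant depending only on the fixed pattern $Q$ there are $2^{O(a)}=O(1)$ states and $O(1)$ work per integer, so the algorithm runs in time polynomial (indeed near-linear) in $n$, and the remaining verifications are routine.
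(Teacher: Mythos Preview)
Your proof is correct and takes a genuinely different route from the paper's. After the same rescaling step that puts $Q$ into $\Ints$, the paper observes that replacing the point $0\in Q$ by the tiny interval $[0,1/3]$ leaves the intersection graph unchanged (within a connected component all $x_i$ differ by integers, so no new intersections are created), and then simply invokes Lemma~\ref{lem:pattern_with_interval} for patterns containing an interval. You instead reformulate domination as a $\Delta$-covering problem on $\Ints$ and solve it directly with a sweep whose state has size $2^{O(a)}$. The paper's reduction is shorter and nicely exhibits the rational-ratio case as a degenerate interval case, but the polynomial it inherits from Lemma~\ref{lem:pattern_with_interval} has exponent $\Theta(a)$ (since $w=\Theta(a)$ after inserting the length-$1/3$ interval), whereas your sweep runs in near-linear time with only a multiplicative $2^{O(a)}$ constant. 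One small remark: testing ``$x_i'-x_j'\in\Ints$'' directly in the word-RAM model over $K$ presumes a basis containing $1$; you can sidestep this by splitting into connected components of the intersection graph instead of $\Ints$-residue classes, which yields the same integralisation (any path witnesses an integer difference) and is trivially polynomial.
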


\begin{proof}
By shifting and rescaling, we may assume without loss of generality that the leftmost
point in $Q$ is in the origin and that all points in $Q$ have integer coordinates. \skb{(Note that this could not be done if the pattern contained an irrational distance ratio.)}
We define a new pattern $Q'$ that results from $Q$ by replacing point $0$ by the 
interval $\civ{0,1/3}$.

Now consider an intersection graph whose vertices are associated with $x_i+Q$ where
$x_1\le x_2\le\cdots\le x_n$.
We assume without loss of generality that the graph is connected and that all $x_i$
are integers.
It can be seen that the intersection graph does not change, if every object $x_i+Q$ 
is replaced by the object $x_i+Q'$.
Since pattern $Q'$ contains the interval $\civ{0,1/3}$, we may simply apply
Lemma~\ref{lem:pattern_with_interval} to compute the optimal dominating set 
in polynomial time.
\end{proof}

\begin{lemma}\label{lem:irrational_dist_ratio_NPC}
If $Q$ is a point pattern that contains two point pairs with an irrational
distance ratio, then \textsc{$Q$-Intersection Dominating Set} is
\NP-complete.
\end{lemma}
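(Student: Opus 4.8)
The problem is clearly in \NP: a candidate dominating set of size at most $k$ can be guessed and verified in polynomial time in the machine model of the remark above, so the real task is \NP-hardness. The plan is to reduce from \textsc{Dominating Set} on planar graphs of maximum degree~$3$, a classical \NP-hard problem.

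Fix two point pairs of $Q$ with irrational distance ratio and let $a,b$ be the corresponding distances, so $a,b\in Q-Q:=\{q-q':q,q'\in Q\}$ and $a/b\notin\mathbb{Q}$. Because $a/b$ is irrational, the map $(s,t)\mapsto sa+tb$ is injective on $\Ints^2$, hence
\[
N:=\{(s,t)\in\Ints^2 : sa+tb\in Q-Q\}
\]
is \emph{finite}; it is symmetric about the origin and contains $(0,0)$, $(\pm1,0)$ and $(0,\pm1)$ since $0,a,b\in Q-Q$. Now, if we place a translate of $Q$ at position $sa+tb$ for a chosen set of integer pairs $(s,t)$, then two of these translates intersect exactly when their parameter difference $(s-s',t-t')$ lies in $N$. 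Consequently the intersection graph obtained this way is precisely the induced subgraph, on the chosen vertex set, of the translation-invariant ``difference graph'' $\cG_N$ on $\Ints^2$ (where $(s,t)$ and $(s',t')$ are adjacent iff $(s-s',t-t')\in N$). So it is enough to show that \textsc{Dominating Set} is \NP-hard on induced subgraphs of $\cG_N$ for every finite, origin-symmetric $N\supseteq\{(0,0),(\pm1,0),(0,\pm1)\}$.

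The way I would do this is to realize a subdivision of the input graph $H$ inside $\cG_N$, in the style of VLSI/orthogonal grid layouts. First take an orthogonal grid drawing of $H$, subdivide every edge a suitable number of times (a multiple of three; this keeps \textsc{Dominating Set} \NP-hard and shifts the optimum in a controlled, predictable way), and blow the drawing up by a large factor so that distinct edges are far apart. Route every subdivided edge as an \emph{induced} path in $\cG_N$ using ``long steps'': if $k$ is the largest integer with $(k,0)\in N$, then the points $(jk,0)$, $j=0,1,2,\dots$, form an induced path, because any two of them differ by $((j-j')k,0)$, which lies in $N$ only for $|j-j'|\le 1$; likewise in the vertical direction. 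Keep parallel routes at $\ell^\infty$-distance larger than $\max\{\|v\|_\infty:v\in N\}$ so that distinct routes create no edges. At each original vertex of $H$ insert a small junction gadget --- a centre point together with the (at most three) incident route-endpoints, positioned so that these endpoints are $\cG_N$-neighbours of the centre but pairwise non-adjacent. Setting the dominating-set target according to the standard bookkeeping for long subdivisions then completes the reduction.

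The main obstacle will be making the realization exact, i.e.\ proving that the $Q$-intersection graph of the constructed point set is \emph{precisely} the subdivided graph --- no missing edges, and more delicately no spurious ones. Irrationality of $a/b$ is exactly what buys this (it is what forces $N$ to be finite), but $N$ can be strictly larger than the $4$-neighbourhood: for instance $Q=\{0,1,1+\sqrt{2}\}$ yields $N=\{(0,0),(\pm1,0),(0,\pm1),(\pm1,\pm1)\}$, the triangular-lattice connection set, and there no sublattice of $\Ints^2$ reproduces a clean square grid, so a genuine routing argument is needed rather than a one-line lattice substitution. The points that require care are thus: verifying, uniformly over all admissible $N$, that suitable long wire directions and a junction gadget with pairwise non-adjacent arms always exist; and handling the corners of the routes, which can introduce short chords that have to be absorbed into the subdivision accounting.
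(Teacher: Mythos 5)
Your first half is sound and matches the paper's key insight: because $a/b$ is irrational, the map $(s,t)\mapsto sa+tb$ is injective on $\Ints^2$, so the connection set $N$ is finite and translates placed on the lattice $\Ints a+\Ints b$ induce a subgraph of a fixed, bounded-degree, translation-invariant graph $\cG_N$. (This is exactly the finiteness argument the paper runs when it shows that $ax^*+Q$ meets $\Ints+Q$ for only finitely many $a\in\Ints$.) The gap is in the second half. You propose to prove \NP-hardness of dominating set on induced subgraphs of $\cG_N$ \emph{uniformly over every admissible $N$} by wire routing, and you yourself list the two steps on which the whole argument hinges --- the existence, for every $N$, of a junction gadget with pairwise non-adjacent arms, and the treatment of chords at corners --- without carrying them out. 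Neither can be waved away. For the junction: if the horizontal wires step by $(k,0)$ and the vertical ones by $(0,k')$, the two arm endpoints adjacent to a junction centre may themselves be adjacent whenever $(k,-k')\in N$, and it is not obvious (and needs a construction, not an assertion) that three pairwise non-adjacent, wire-compatible arms exist for every finite symmetric $N$. For the corners: a chord turns a stretch of the route into a triangle, which breaks the exact bookkeeping $\gamma(G')=\gamma(G)+\sum_e (\text{subdivisions of } e)/3$ that your reduction relies on; ``absorbing it into the accounting'' is precisely the part that would have to be proved. As it stands, the proposal is a plan whose hardest steps are explicitly deferred.

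The paper avoids this uniformity problem with one extra idea you are missing: instead of working with whatever neighbourhood set the pattern happens to produce, it chooses the second lattice direction extremally --- $y^*=a'x^*$, where $a'$ is the \emph{largest} integer $a$ with $(ax^*+Q)\cap(\Ints+Q)\neq\emptyset$ --- so that two steps in that direction can never create an intersection. This pins the neighbourhood structure down to that of the triangular grid for \emph{every} admissible $Q$, after which a single, one-time \NP-hardness proof for dominating set on induced triangular grid graphs (itself a nontrivial Planar 3-SAT reduction, carried out in the paper's appendix with variable cycles, ears, and literal-path parity gadgets) finishes the job. If you want to salvage your route, the analogous move is to replace your basis by an extremal one so that the resulting $N$ falls into a short explicit list of neighbourhood types, and then build the routing and junction gadgets for each type concretely.
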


\begin{proof}
The containment in \NP is trivial; we show the hardness by reducing from
dominating set on induced triangular grid graphs. (These are finite induced
subgraphs of the triangular grid, which is the graph with vertex set
$V=\Ints^2$ and edge set $E=\big\{\big((a,b),(a+\alpha,b+\beta)\big)\;:\;
|\alpha|\leq 1, |\beta|\leq 1, \alpha\neq \beta\big\}$.) The \NP-hardness of
dominating set in induced triangular grid graphs is proven in the appendix.
Note that the dominating set problem is known to be \NP-hard on induced grid
graphs, but this does not imply the hardness on triangular grids, because
triangular grid graphs are not a superclass of grid graphs.

We show that the infinite triangular grid can be realized as a
$Q$-intersection graph, where the $Q$-translates are in a bijection with the
vertices of the triangular grid. Therefore, any induced triangular
grid graph is realized as the intersection graph of the $Q$-translates
corresponding to its vertices.

Rescale $Q$ so that it has span $1$. It cannot happen that all the points are
rational, because it would make all distance ratios rational as well. Let
$x^*\in Q$ be the smallest irrational point. Let $a\in \Ints$, and consider
the intersection of the translate $ax^*+Q$ with the set $\Ints+Q$. We claim
that this intersection is non-empty only for a finite number of values $a\in
\Ints$. Suppose the opposite. Since $Q$ is a finite pattern, there must be a
pair $z,z' \in Q$ such that $ax^*+z=b+z'$ has infinitely many solutions
$(a,b)\in \Ints^2$. In particular, there are two solutions $(a_1,b_1)$ and
$(a_2,b_2)$ such that $a_1 \neq a_2$ and $b_1\neq b_2$. Subtracting the two
equations we get $(a_1-a_2)x^*=b_1-b_2$, which implies
$x^*=\frac{b_1-b_2}{a_1-a_2}$. This is a contradiction since $x$ is
irrational.

Let $y^*=a'x^*$, where $a'$ is the largest value $a$ for which $ax^*+Q$
intersects $\Ints+Q$. It follows that
$\big\{j\in \Ints \,\big|\, (jy^*+Q)\cap (\Ints + Q) \neq \emptyset \big\}
= \big\{{-1},0,1\big\}.$

Consider the intersection graph induced by the sets
$\big\{jy^* + k +Q \,\big|\, (j,k) \in \Ints^2\big\}$.
The above shows that a fixed translate $jy^*+k+Q$ is not intersected by the
translates $(j+\alpha)y^*+(k+\beta)+Q$ if $|\alpha|\ge2$. It is easy to see
that $|\beta|\ge2$ does not lead to an intersection either. Also note that
$\alpha=\beta=\pm 1$ does not give an intersection; however all the remaining
cases are intersecting, i.e., if
\[(\alpha,\beta)\in \big\{ (-1,0),(-1,1),(0,-1),(0,0),(0,1),(1,-1),(1,0)\big\}\]
then $(j+\alpha)y^*+(k+\beta)+Q$ intersects $jy^*+k+Q$. Thus, the intersection
graph induced by $\big\{jy^* + k + Q \,\big|\, (j,k) \in \Ints^2\big\}$ is a
triangular grid.
\end{proof}




\begin{lemma}\label{lem:irrational_dist_ratio_FPT}
If $Q$ is a point pattern that has point pairs with an irrational
distance ratio, then \textsc{$Q$-Intersection Dominating Set} has an \FPT
algorithm parameterized by solution size.
\end{lemma}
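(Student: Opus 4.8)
The plan is to exploit the structural picture already established in the proof of Lemma~\ref{lem:irrational_dist_ratio_NPC}: after rescaling $Q$ to have span $1$, the intersection graph of any set of translates $\{x_i + Q\}$ has bounded ``local geometric spread'' once we account for how the irrational point $x^*$ forces intersections to be short-range. More precisely, I would first observe that if $x_i + Q$ and $x_j + Q$ intersect, then $|x_i - x_j| \le 1$, so translation differences that create edges lie in the fixed interval $[-1,1]$. The key additional fact to extract is that, within such a window, the combinatorial type of the intersection pattern is governed by finitely many ``thresholds'' determined by the finitely many differences $z - z'$ with $z,z' \in Q$; two left endpoints $x_i, x_j$ with $|x_i - x_j|$ in the same cell of this finite subdivision of $[-1,1]$ see the same neighbors among patterns placed at comparable offsets. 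This is exactly the mechanism that made the grid construction work, and it means the graph is, locally, a bounded-complexity geometric object.

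Next I would set up a standard branching/kernelization argument for \FPT. Given the target size $k$: repeatedly pick an undominated vertex $v = x_i + Q$; any dominating set must contain $v$ or one of its neighbors, and by the window bound every neighbor of $v$ has its left endpoint in $[x_i - 1, x_i + 1]$. Branching on which of these ``relevant'' vertices to put into the solution is not immediately bounded because a window can contain many translates. To control this, I would use the clique structure: the translates whose left endpoints fall in a single cell of the finite threshold subdivision of a length-$1$ window form a cluster of pairwise-equivalent vertices (up to the neighbor relation), so one may discard all but a bounded number of them, or collapse them, before branching. After this cleanup each window has $O(1)$ combinatorially distinct choices, giving a bounded branching factor; the recursion depth is $k$, so the search tree has size $2^{O(k)}$ (times a constant depending only on $Q$), and each node costs polynomial time.

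The most delicate step — and where I expect the real work to lie — is making the ``collapse equivalent translates'' move rigorous while preserving the exact domination relation, not merely adjacency. Two translates $x_i + Q$ and $x_j + Q$ with $x_i, x_j$ in the same threshold cell of a window have the same out-neighborhood with respect to left endpoints in any other fixed cell, but they differ from each other, and the precise set of vertices each dominates depends on the global arrangement, which is infinite in principle but finite for any actual input. The clean way to handle this is to prove a bounded-cluster lemma: among any set of input translates whose pairwise left-endpoint differences are all below the smallest positive threshold (hence forming a clique with identical external neighborhoods), keeping any three of them suffices — mirroring the non-overlapping-intervals argument in Lemma~\ref{lem:sparse_dom_set} — so that after this reduction every length-$1$ window contains $O(1)$ left endpoints. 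Once that lemma is in hand, the graph has bounded local size, the branching is bounded, and \FPT membership follows; indeed one could alternatively note that the reduced instance has a linear kernel (a connected yes-instance has span $O(k)$ and hence $O(k)$ vertices), which gives the \FPT claim directly.
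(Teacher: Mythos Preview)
You have missed the one-line observation that makes this lemma immediate, and the workaround you propose rests on a false premise.

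The paper's argument is simply this: since $Q$ is a finite \emph{point} pattern, two distinct translates $x+Q$ and $y+Q$ intersect if and only if $x-y$ lies in the finite set $\{q-p : p,q\in Q,\ p\neq q\}$, which has at most $t^2-t$ elements where $t=|Q|$. After deleting duplicate translates, every vertex therefore has degree at most $t^2-t$, and the standard bounded-degree branching (pick an undominated vertex, branch on it or one of its $\le t^2-t$ neighbours) gives an $O\big(t^{2k}(|V|+|E|)\big)$ algorithm. The irrationality hypothesis plays no role in this direction.

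Your proposal goes astray precisely because you treat adjacency as a range condition (``left endpoints in the same window'') rather than the discrete condition it actually is for point patterns. Concretely, your claim that translates whose pairwise left-endpoint differences are below the smallest positive threshold ``form a clique with identical external neighbourhoods'' is false on both counts: if $0<|x_i-x_j|<\min\{|d|:d\in (Q-Q)\setminus\{0\}\}$ then $x_i+Q$ and $x_j+Q$ are \emph{non-adjacent}, and their neighbourhoods are the distinct translated copies $x_i+(Q-Q)$ and $x_j+(Q-Q)$ of a finite set, hence not identical. The threshold-cell picture you are importing is the right intuition for interval patterns (where adjacency is ``$|x_i-x_j|\le$ length''), but for pure point patterns the adjacency set is zero-dimensional and the whole cell apparatus collapses to the bounded-degree fact above.
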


\begin{proof} \skb{In polynomial time, we can remove all duplicate translates,
since a minimum dominating set contains at most one of these objects, and any
minimum dominating set of the resulting graph is a dominating set of the
original graph. Suppose our pattern consists of $t$ points. In the
duplicate-free graph, point $i$ of the pattern translate may intersect point
$j$ of another translate, for some $i\neq j$, so the maximum degree is
$t^2-t$. Therefore} we are looking for a dominating set in a graph of bounded
degree. Hence, a straightforward branching approach gives an \FPT algorithm:
choose any undominated vertex $v$; either $v$ or one of its at most \skb{$t^2-t$}
neighbors is in the dominating set, so we can branch \skb{$t^2-t+1$ ways. If all
vertices are dominated} after choosing $k$ vertices, then we have found a
solution. This branching algorithm has depth $k$, with linear time required at
each branching, so the total running time is $O\big(t^{2k}(|V|+|E|)\big)$.
\end{proof}

In the following lemma we show that any graph can be
obtained as a  1-dimensional pattern intersection graph for a suitable choice
of pattern $Q$. Consequently \textsc{$Q$-Intersection Dominating Set} is
\Wtwo-complete if the pattern $Q$  is part of the input.

\begin{lemma}\label{lem:large_pattern_proof}
Let $G$ be a graph with vertex set $V=\{v_1,\ldots,v_n\}$ and edge set $E=\{e_1,\ldots,e_m\}$.
Then there exists a finite pattern $Q\subseteq\mathbb{R}$ and there exist real numbers $x_1,\ldots,x_n$
so that $\{v_i,v_j\}\in E$ if and only if $(x_i+Q)\cap(x_j\cap Q)\ne\emptyset$.
\end{lemma}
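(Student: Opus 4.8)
The plan is to place each vertex $v_i$ at a \emph{base position} $x_i\in\Reals$ and to assemble $Q$ out of one feature for every edge--endpoint incidence of $G$. The starting observation is that $(x_i+Q)\cap(x_j+Q)\neq\emptyset$ precisely when $x_j-x_i\in Q-Q$, so the goal is to make the difference set $Q-Q$ meet the set of prescribed differences $\{x_j-x_i:i\neq j\}$ in exactly those differences that come from edges. The threat is that $Q-Q$ accidentally contains a value equal to some $x_j-x_i$ for a non-edge; I would defuse this by (a) taking the $x_i$ to be a Sidon set, so that all pairwise sums $x_i+x_j$ (equivalently, all pairwise differences) are distinct --- concretely $x_i:=4^i$, whose Sidon property follows from uniqueness of base-$4$ expansions --- and (b) giving every edge its own widely separated ``block''.

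In detail, I would enumerate the edges $e_1,\dots,e_m$, set $C:=4^{n+1}$ and $B_\ell:=\ell\cdot C$, and let $Q$ consist of $0$ together with all points $B_\ell+x_i$ for which $v_i$ is an endpoint of $e_\ell$; this is a finite set of at most $2m+1$ reals (the point $0$ merely keeps $Q$ nonempty when $G$ is edgeless). The intended witness for an edge $e_\ell=\{v_i,v_j\}$ is the point $x_i+(B_\ell+x_j)=x_j+(B_\ell+x_i)=x_i+x_j+B_\ell$, which lies in both $x_i+Q$ and $x_j+Q$ because $B_\ell+x_i$ and $B_\ell+x_j$ both lie in $Q$; thus every edge is realized.

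The work is in the converse. Suppose $x_p+Q$ and $x_q+Q$ share a point $s$ with $p\neq q$. Since $x_p\neq x_q$ and every feature of a translate other than the translated $0$ is at least $C>4^n$, the value $s$ must be a nonzero feature of each translate, i.e.\ $s=B_\ell+x_i+x_p=B_{\ell'}+x_{i'}+x_q$ with $v_i$ an endpoint of $e_\ell$ and $v_{i'}$ an endpoint of $e_{\ell'}$. Every sum of two $x$'s lies in $(0,C)$, whereas $B_\ell-B_{\ell'}$ is either $0$ or of absolute value at least $C$; hence the equation forces $\ell=\ell'$. So $v_i$ and $v_{i'}$ are the two endpoints of the single edge $e_\ell$, and $x_i+x_p=x_{i'}+x_q$. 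The Sidon property then gives $\{i,p\}=\{i',q\}$ as multisets, and since $p\neq q$ this yields $i=q$ and $i'=p$, so both $v_p$ and $v_q$ are endpoints of $e_\ell$, i.e.\ $e_\ell=\{v_p,v_q\}$ and $v_pv_q\in E$. This proves $\{v_i,v_j\}\in E\iff(x_i+Q)\cap(x_j+Q)\neq\emptyset$.

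The only genuine obstacle is the spurious-difference phenomenon, and it disappears once the vertices sit on a Sidon set and each edge is confined to its own block; everything else is routine estimation of sizes. Since \textsc{Dominating Set} is $\Wtwo$-complete on general graphs while the general problem trivially remains in $\Wtwo$, this construction gives the claimed $\Wtwo$-completeness of \textsc{$Q$-Intersection Dominating Set} when the pattern $Q$ is part of the input.
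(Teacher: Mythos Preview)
Your proof is correct and follows essentially the same approach as the paper: both take $x_i=4^i$, assign each edge its own well-separated ``block'' in $Q$ with two points (one per endpoint), use the block separation to force any coincidence to occur within a single edge-block, and then use uniqueness of base-$4$ representations to finish. The only cosmetic differences are that the paper uses $4^{q+k}-4^{a(k)}$ and $4^{q+k}-4^{b(k)}$ (subtracting the endpoint marker, with exponentially spaced blocks) whereas you use $B_\ell+x_i$ (adding it, with linearly spaced blocks) and phrase the final step via the Sidon property rather than a direct digit argument.
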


\begin{proof}
Define $q=2(n+m)$.
For every edge $e_k$, we let $a(k)$ and $b(k)$ with $a(k)<b(k)$ denote the indices of
its incident vertices.
For $k=1,\ldots,m$, the pattern $Q$ contains the two integers
\[ 4^{q+k}-4^{a(k)} \mbox{\qquad and\qquad} 4^{q+k}-4^{b(k)}. \]
Furthermore define $x_i=4^i$ for $i=1,\ldots,n$.

First suppose $e_k=\{v_i,v_j\}\in E$ with $i=a(k)$ and $j=b(k)$.
Then $x_i+Q$ and $x_j+Q$ both contain the number $4^{q+k}$, so that indeed
$(x_i+Q)\cap(x_j+ Q)\ne\emptyset$.
Next suppose $(x_i+Q)\cap(x_j+ Q)\ne\emptyset$.
This means that there exist edges $e_k$ and $e_{\ell}$ and $c\in\{a(k),b(k)\}$ and
$d\in\{a(\ell),b(\ell)\}$ so that
\[ 4^i+(4^{q+k}-4^{c}) ~=~ 4^j+(4^{q+\ell}-4^{d}). \]
Since the exponents $q+k$ and $q+\ell$ are much larger than the other exponents in this equation,
they must coincide with $k=\ell$.
Without loss of generality, this leads to $c=a(k)$ and $d=b(k)$.
The equation boils down to $4^i-4^{a(k)}=4^j-4^{b(k)}$, which implies $i=a(k)$ and $j=b(k)$.
Hence vertices $v_i$ and $v_j$ are indeed connected by an edge $e_k$.
\end{proof}

\begin{remark}
In our handling of the problem, the pattern was part of the problem definition. Making the pattern part of the input leads to an \NP-complete problem: Lemma~\ref{lem:irrational_dist_ratio_NPC} can be adapted to this scenario. If we also allow the size of the pattern to depend on the input, then the problem is \Wtwo complete (when parameterized by solution size) by Lemma~\ref{lem:large_pattern_proof}. 

We propose the following problem for further study, where the pattern depends on the input, but has fixed size.
\end{remark}

\mypara{Open question.}{ Let $Q$ be the pattern defined by two unit intervals on a line at distance $\ell$. Is there an \FPT algorithm (either with parameter $k$ or $k+\ell$) on intersection graphs defined by translates of $Q$, that can decide if such a graph has a dominating set of size $k$?
It can be shown that this problem is \NP-complete, and 
Theorem~\ref{thm:genW1containment} below shows that it is contained in \Wone.}

\section{Higher dimensional shapes: \Wone vs. \Wtwo}

In this section we show that dominating set on intersection graphs of 2-dimensional objects is contained in \Wone if the shapes have a
constant size description. First, we demonstrate the method on unit disk
graphs, and later we state a much more general version where the shapes are
semi-algebraic sets. In order to show
containment, it is sufficient to give a non-deterministic algorithm that has
an FPT time deterministic preprocessing, then a nondeterministic phase where
the number of steps is only dependent on the parameter. More precisely, we
use the following theorem.

\begin{theorem}[\cite{Flum06}]\label{thm:w1hdesc}
A parameterized problem is in \Wone if and only if it can be computed by a
nondeterministic RAM program accepting the input that
\begin{enumerate}
\item performs at most $f(k)p(n)$ deterministic steps;
\item uses at most $f(k)p(n)$ registers;
\item contains numbers smaller than $f(k)p(n)$ in any register at any time;
\item for any run on any input, the nondeterministic steps are among the last
$g(k)$ steps.
\end{enumerate}
Here $n$ is the size of the input, $k$ is the parameter, $p$ is a polynomial
and $f,g$ are computable functions. The non-deterministic instruction is
defined as guessing a natural number between 0 and the value stored in the
first register, and storing it in the first register. Acceptance of an input is defined as having a computation path that accepts.
\end{theorem}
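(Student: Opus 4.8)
The plan is to prove the two implications of the biconditional separately, relying on the standard fact that a parameterized problem lies in \Wone\ exactly when it is FPT-reducible to $p$-\textsc{Clique}, or equivalently to weighted satisfiability of CNF formulas of constant clause width (the weft-$1$ normalized form).

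The direction ``a \Wone-problem has such a machine'' is the easy one. Given $(Q,\kappa)\in\Wone$, I would fix an FPT-reduction to $p$-\textsc{Clique} and build a RAM program that, on input $(x,k)$, first runs this reduction deterministically to obtain a graph $G'$ with $|V(G')|\le f(k)p(n)$ and target size $k'\le g(k)$, stores its adjacency matrix in $O(|V(G')|^2)$ registers (all stored numbers bounded by $|V(G')|^2$), and only in its final phase nondeterministically guesses $k'$ vertices --- one number in $\{0,\dots,|V(G')|-1\}$ per guess step --- and then deterministically checks in $O(k'^2)$ further steps that the guessed vertices are pairwise distinct and pairwise adjacent, each check an $O(1)$ array lookup, accepting iff all checks pass. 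All nondeterministic steps, together with the verification, lie in the last $O(k'^2)=g'(k)$ steps, so conditions (1)--(4) hold.

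For the converse --- ``such a machine witnesses membership in \Wone'' --- I would first use condition (4): all nondeterministic steps occur in the final $g(k)$ steps, so the deterministic prefix can be executed in $f(k)p(n)$ time, leaving a fixed configuration $C$, i.e.\ an array of $N:=f(k)p(n)$ numbers, each below $N$, together with the current program counter. The task reduces to deciding whether the $g(k)$-step nondeterministic tail accepts from $C$, and I would FPT-reduce that to weighted satisfiability of a constant-width CNF. The variables would one-hot-encode, for each of the $\le g(k)$ tail steps, the complete \emph{event} at that step --- program counter, instruction executed, the $O(1)$ register indices touched, the address and value read, the address and value written, and the successor program counter --- each numeric component ranging over $\{0,\dots,N-1\}$ and thus using $N$ Boolean variables; for every read step I would additionally guess, one-hot, the \emph{source} of the value read (``$C$'', or the index of the tail step that last wrote that address). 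Requiring a satisfying assignment of Hamming weight equal to the number of one-hot blocks pins down a genuine choice of all these quantities, and the clauses --- each of constant width, obtained from the one-hot encoding --- would assert: internal validity of each event against the fixed program (one clause ``if the program counter is $p$ then \dots'' per value $p$); agreement of each event's successor program counter with the next event's program counter; correctness of every read value (it equals $C[\text{address}]$ and no earlier tail step wrote that address, or it equals the value written by the declared source step and no tail step strictly in between wrote that address); and that the last event halts in an accepting state. Since a RAM step touches only $O(1)$ cells, $C$ is fixed, and the one-hot encoding turns every lookup into a fixed array into a block of constant-width clauses, this is a bounded-width weighted CNF with $O(g(k))\cdot\mathrm{poly}(n)$ variables and clauses, built in FPT time, whose weight-$g'(k)$ satisfying assignments correspond exactly to accepting tail runs.

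I expect the converse to be where the real work lies, for two related reasons. First, the reduction must be \emph{FPT-sized}: the obvious idea of letting the formula simulate the tail step by step chains the $g(k)$ lookups into $C$ sequentially, which drives the weft up to $g(k)$ (equivalently, blows the clique instance up to size roughly $n^{g(k)}$ instead of FPT); the one-hot encoding is exactly what avoids this, replacing each lookup into a fixed array by a parallel block of constant-width clauses so that all constraints hang in parallel under one large conjunction and the weft stays at $1$. Second, RAM \emph{indirect addressing} must be accommodated: the address of a tail read is not known statically but is computed from the guessed events, so ``last write wins'' cannot be resolved at construction time --- guessing the source of each read is the device that turns this into the constant-width read-correctness clauses. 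The remaining pieces --- the deterministic prefix simulation and the control-flow and arithmetic checks --- I expect to be routine bookkeeping.
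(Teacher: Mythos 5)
This theorem is imported verbatim from Flum and Grohe's book and the paper gives no proof of it, so there is no in-paper argument to compare against; I can only measure your proposal against the standard textbook proof. On that measure your argument is essentially the known one and is correct in outline. The forward direction (fpt-reduce to $p$-\textsc{Clique}, compute the reduction deterministically, guess the $k'$ vertices and verify them in the tail) is exactly the standard argument. For the converse, Flum and Grohe route the tail-acceptance problem through model checking of existential first-order sentences over a structure with universe $\{0,\dots,N-1\}$ that records the configuration $C$ and the needed arithmetic; your weighted bounded-width CNF with one-hot blocks is precisely the propositional shadow of that encoding (each first-order variable over an $N$-element universe becomes one block of $N$ Boolean variables), and you correctly identify the two genuinely delicate points --- keeping the weft at $1$ rather than chaining $g(k)$ sequential lookups, and resolving indirect addressing by guessing the provenance of each read value. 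Two details deserve to be made explicit. First, the weight argument needs the pairwise at-most-one clauses (width $2$) inside each block: without them a weight-$W$ assignment could leave a block empty, which vacuously satisfies all your implication clauses and breaks soundness; with them, weight exactly equal to the number of blocks forces exactly one selection per block, and the width stays constant because the at-least-one direction never appears as a clause. Second, the constraints ``no tail step strictly between the source and the read wrote that address'' must be split into one width-$3$ clause per intermediate step and address, which keeps the width constant at the cost of an extra factor $g(k)$ in the number of clauses --- still fpt-sized. With those two points spelled out, the proposal is a complete and faithful rendering of the standard proof.
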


\begin{theorem}\label{thm:dsudg_in_w1}
The dominating set problem on unit disk graphs is contained in \Wone.
\end{theorem}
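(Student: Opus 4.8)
The plan is to apply Theorem~\ref{thm:w1hdesc}: design a nondeterministic RAM program whose deterministic preprocessing runs in \FPT\ time and polynomial space (with numbers of polynomial size), and whose nondeterministic steps — the guesses of the $k$ vertices forming the candidate dominating set — occur only at the very end. The subtle point is that after guessing $k$ indices, we are \emph{not} allowed to spend polynomial time verifying that the guessed set dominates all $n$ vertices, since verification would need to touch all $n$ disks but must happen within the last $g(k)$ steps. So the whole verification must be precomputed in a form that a guess of $k$ numbers can consult in $O(g(k))$ time.

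\medskip
\noindent\textbf{Key steps.}
First I would reduce the geometry to a grid of bounded complexity. Scale so that the disks have unit radius; two unit disks intersect iff their centers are within distance $2$. Overlay a grid of cell size $2/\sqrt 2 = \sqrt 2$ (so every cell has diameter $2$, hence any two centers in one cell are adjacent). Two centers can be adjacent only if they lie in the same or in a neighboring cell within a constant-radius window of cells. The essential observation is that \emph{from the point of view of being dominated}, a vertex only cares about which cells contain a chosen center and, within those constantly-many relevant cells, which disks are chosen; and conversely, the set of vertices dominated by a chosen set depends only on the ``type'' of the chosen set relative to this grid. Concretely, a nonempty cell can contain at most $k$ chosen centers if there is any hope of a size-$k$ solution, and one needs to record for each relevant cell the \emph{coordinates} of the (at most $k$) chosen centers to determine exactly which disks in the neighborhood they dominate.

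The preprocessing phase does the following deterministically, in $f(k)\cdot \mathrm{poly}(n)$ time. (1) Build the grid and bucket the $n$ centers into cells; discard cells, and all reasoning, except near nonempty cells. (2) For the domination relation: observe that whether disk $u$ is dominated is a local condition, and more importantly that a candidate solution is fully described by choosing, for each of the $O(n)$ relevant cells, a subset of size $\le k$ of its disks, with the total over all cells being $\le k$. So actually the solution is supported on at most $k$ cells. (3) Here is the crux — I would precompute, in \FPT\ time, a data structure indexed by \emph{the guess itself}. The guess will be $k$ numbers in $\{1,\dots,n\}$, the indices of the chosen disks. To verify domination in $O(g(k))$ steps we cannot re-scan all $n$ vertices; instead, I precompute enough so that the guessed $k$-tuple can be checked by $O(g(k))$ table lookups. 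One clean way: precompute for every cell $c$ and every small subset $P$ of $\le k$ disks from the constantly-many cells in $c$'s neighborhood, a single bit saying ``every disk in $c$ is dominated by $P$.'' There are $O(n)$ cells, each with $O(1)$ neighbors, hence $O(n^{O(k)})$ such $(c,P)$ pairs — which is $f(k)\cdot\mathrm{poly}(n)$ — so this table fits in the allowed space and is computable in the allowed time. Register values stay polynomially bounded because we only ever store indices in $\{1,\dots,n\}$ and pointers into a $\mathrm{poly}(n)$-sized table.

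Finally, the nondeterministic phase: guess the $k$ indices $i_1,\dots,i_k$ (each guess is one nondeterministic instruction, well within $g(k)$ of them), and for each nonempty cell $c$ look up whether the subset of $\{i_1,\dots,i_k\}$ falling in $c$'s neighborhood dominates $c$. But there are $O(n)$ cells — too many to check in $g(k)$ steps. The fix is to only check the $\le k$ cells that each contain a \emph{chosen} center, \emph{plus} to handle empty-of-choice cells via the precomputed table in the other direction: a cell with no chosen center in its neighborhood is only dominated if it is empty, so in preprocessing I also flag, for each cell, whether it can ever be dominated and by which neighboring cells; then the guess need only confirm that each of the $O(k^2)$ cells in the neighborhoods of the $k$ chosen cells is covered, and that \emph{no other nonempty cell exists} outside these neighborhoods — the latter being a single precomputed bit.

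\medskip
\noindent\textbf{Main obstacle.} The real difficulty is exactly this last bookkeeping: arranging that the final verification touches only $g(k)$ objects while still certifying domination of all $n$ vertices. The resolution, as sketched, is that domination is grid-local, a size-$k$ solution is supported on $\le k$ cells, and the number of ``relevant cell configurations'' is $f(k)\cdot\mathrm{poly}(n)$, so all the global checking can be pushed into the \FPT\ preprocessing and encoded as polynomially many bits that the guess consults in $O(k^2)$ lookups. I expect the write-up to spend most of its effort formalizing the grid, the notion of a cell's neighborhood, and the claim that the number of precomputed configurations is $f(k)\cdot\mathrm{poly}(n)$ with polynomially bounded register contents; the nondeterministic phase itself is short once the tables are in place.
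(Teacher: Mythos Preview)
Your proposal contains a fatal arithmetic error that breaks the argument. You write that the number of $(c,P)$ pairs is $O(n^{O(k)})$, ``which is $f(k)\cdot\mathrm{poly}(n)$.'' It is not: $n^{k}$ is not bounded by $f(k)\cdot n^{c}$ for any fixed~$c$ and any function~$f$. Theorem~\ref{thm:w1hdesc} requires the total number of steps, the number of registers, and the register contents all to be bounded by $f(k)\cdot p(n)$ for a \emph{fixed} polynomial~$p$, so a lookup table with $n^{O(k)}$ entries is disallowed on all three counts. (Even addressing such a table would require storing an index of magnitude $n^{O(k)}$.) As written, your preprocessing only shows containment in $\mathsf{XP}$, which is trivial. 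The same objection applies to your ``single precomputed bit'' for checking that no nonempty cell lies outside the chosen neighborhoods: that bit depends on \emph{which} $\le k$ cells were chosen, and there are $\binom{O(n)}{k}$ such choices.

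The idea you are missing is that the lookup table must be indexed by objects whose description length is \emph{independent of~$k$}. The paper achieves this via an arrangement argument: a candidate set $D$ of $k$ centers is dominating iff the union of the $k$ radius-$2$ disks around~$D$ covers all $n$ input centers. The vertical decomposition of the arrangement of these $k$ disks has $O(k^{2})$ faces, and --- crucially --- each face is determined by at most \emph{four} of the disks, not by all~$k$. Hence there are only $O(n^{4})$ possible faces over all choices of~$D$, and in polynomial time (with exponent independent of~$k$) one can precompute, for every potential face, how many input points lie in it. After guessing $k$ indices, one computes the $O(k^{2})$ faces of the guessed arrangement and sums their precomputed counts in $O(k^{2})$ lookups, accepting iff the total is~$n$. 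Your grid bucketing is a reasonable heuristic intuition, but it does not give constant-size cell descriptions; the arrangement decomposition does.
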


\begin{figure}
\begin{center}
\includegraphics[scale=0.6]{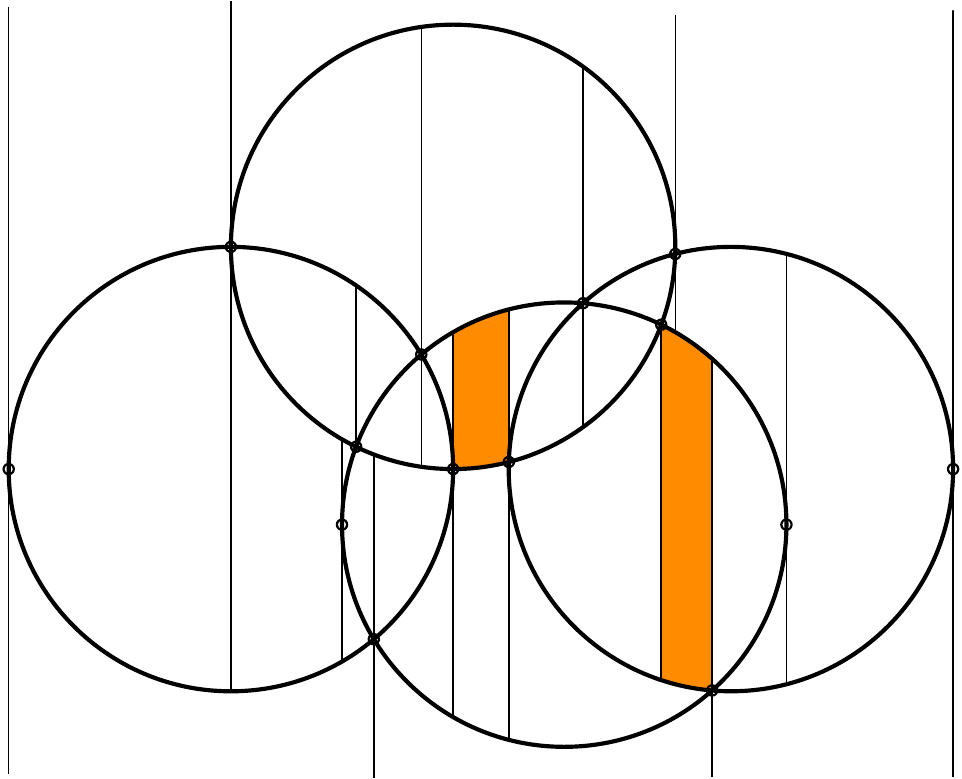}
\end{center}
\caption{Two faces of a vertical decomposition.} \label{fig:cells}
\end{figure}

\begin{proof}
Let $P$ be the set of centers of the  unit disks that form the input instance. For a subset $D\subseteq P$, let $\cC_2(D)$ and $\cD_2(D)$ be the set of circles and disks of radius 2, respectively,  centered at the points of $D$.
(Note that $D$ is a dominating set if and only if $\bigcup\cD_2(D)$, the union of the disks in $\cD_2(D)$, covers all points in $P$.)
Shoot a vertical ray up and down from each of the $O(k^2)$ intersection points between the circles of $\cC_2(D)$, and also from the leftmost and rightmost point of each circle. Each ray is continued until it hits a circle (or to infinity). The arrangement we get is a \emph{vertical decomposition}~\cite{Berg08} (see Fig.~\ref{fig:cells}). Each face of this decomposition is defined by at most four circles. This is not only true for the 2-dimensional faces, but also for the 1-dimensional faces (the edges of the arrangement) and 0-dimensional faces (the vertices). We consider the faces to be relatively open, so that they are pairwise disjoint.

In our preprocessing phase, we compute all potential faces of a vertical decomposition of any subset $D\subseteq P$ by looking at all 4-tuples of circles from $\cC_2(P)$. We create a lookup table that contains the number of input points covered by each potential face in $O(n^4)$ time.

Next, using nondeterminism we guess $k$ integers, representing the points of
our solution; let $D$ be this point set. The rest of the algorithm
deterministically checks if $D$ is dominating. We need to compute the
vertical decomposition of $\cC_2(D)$; this can be done in $O(k^2)$
time~\cite{Berg08}. Finally, for each of the $O(k^2)$ resulting faces
of $\bigcup\cD_2(D)$, we can get the number of input points covered
from the lookup table  in constant time. We accept if these numbers sum to
$n$.  By Theorem~\ref{thm:w1hdesc} we can thus conclude that 
dominating set on unit disk graphs is in \Wone.
\end{proof}


In order to state the general version of this theorem, we introduce
semi-algebraic sets. A \emph{semi-algebraic set} is a subset of $\Reals^d$
obtained from a finite number of sets of the form $\{x \in \Reals^d \;|\; g(x)
\geq 0\}$, where $g$ is a $d$-variate polynomial with integer coefficients,
by Boolean operations (unions, intersections, and complementations). Let
$\Gamma_{d,\Delta,s}$ denote the family of all semi-algebraic sets in
$\Reals^d$ defined by at most $s$ polynomial inequalities of degree at most
$\Delta$ each. If $d, \Delta, s$ are all constants, we refer to the sets in
$\Gamma_{d,\Delta,s}$ as constant-complexity semi-algebraic sets.

Let $\cF$ be a family of constant complexity semi-algebraic sets in
$\Reals^d$ that can be specified using $t$ parameters $a_1,\ldots,a_t$. If
the expressions defining $\cF$ are also polynomials in terms of the
parameters, then we call $\cF$ a \emph{$t$-parameterized family of semi-
algebraic sets}. For example, the family of all balls in the $\Reals^3$ is a
4-parameterized family of semi- algebraic sets, since any ball can be
specified using an inequality of the form
$(x_1-a_1)^2+(x_2-a_2)^2+(x_3-a_3)^2 - a_4^2\leq 0$. As another example, the
family of all triangles in the plane is a 6-parameterized algebraic set,
since any triangle is the intersection of three half-planes, and any half-
plane can be specified using two parameters.

\begin{theorem}\label{thm:genW1containment}
Let $\cF$ be a $t$-parameterized family of semi-algebraic sets, for some constant~$t$.
Then dominating set is in \Wone for intersection graphs defined by~$\cF$.
\end{theorem}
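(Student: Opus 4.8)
The plan is to generalize the proof of Theorem~\ref{thm:dsudg_in_w1} from unit disks to an arbitrary $t$-parameterized family $\cF$ of constant-complexity semi-algebraic sets, again invoking the characterization of \Wone from Theorem~\ref{thm:w1hdesc}. Let $\objects=\{\objects_1,\dots,\objects_n\}$ be the input sets, where $\objects_i\in\cF$ is specified by a parameter vector $\bu_i\in\Reals^t$. The key observation is that ``$\objects_j$ is dominated by $\objects_i$'' means $\objects_i\cap \objects_j\neq\emptyset$, and for a fixed $\objects_i$ the region $R_i := \{x\in\Reals^d \mid x \in \objects_{\bu} \text{ for some } \objects_{\bu}\in\cF \text{ with } \objects_{\bu}\cap \objects_i\neq\emptyset\}$ plays the role that the radius-2 disk played before. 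More precisely, I want to work in the parameter space $\Reals^t$: a candidate set $\objects_i$ is dominated by $\objects_j$ iff $\bu_j$ lies in a semi-algebraic region $N(\bu_i)\subseteq\Reals^t$ of constant description complexity, where the bound on the complexity of $N(\bu_i)$ depends only on $d,\Delta,s,t$ (one existentially quantifies over the $d$ coordinates of a common point and the $t$ parameters are free; by quantifier elimination over the reals — e.g.\ the results of Basu--Pollack--Roy — the resulting set in the $2t$ variables $(\bu_i,\bu_j)$ is semi-algebraic with constant complexity).

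Having set this up, I would run the same vertical-decomposition argument, but now in parameter space $\Reals^t$ rather than in $\Reals^2$. Given a guessed solution $D$ of size $k$, the sets $\{N(\bu_i) : \objects_i\in D\}$ are $k$ semi-algebraic sets in $\Reals^t$, each of constant complexity, and $D$ dominates everything iff every input point $\bu_j$ lies in $\bigcup_{i\in D} N(\bu_i)$. Decompose $\Reals^t$ into cells according to the arrangement of these $k$ boundaries using a constant-complexity cell decomposition (cylindrical algebraic decomposition, or the vertical decomposition of arrangements of semi-algebraic sets of bounded complexity, cf.~\cite{Berg08}); the number of cells is $O(k^{c})$ for a constant $c=c(d,\Delta,s,t)$, and — crucially — each cell is determined by a bounded number (say $c'$) of the defining sets $N(\bu_i)$. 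In the FPT-time deterministic preprocessing phase, for every $c'$-tuple of input sets we compute the cells of the corresponding local arrangement and, for each cell, the number of input points $\bu_j$ falling in it; this takes $n^{O(1)}$ time and $n^{O(1)}$ registers, and the stored numbers are at most $n$.

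The nondeterministic phase then mirrors the disk case exactly: guess $k$ integers encoding $D$ (these guesses are the last $g(k)$ steps, each a guess of a number between $0$ and a register value, as required by Theorem~\ref{thm:w1hdesc}), deterministically build the constant-complexity cell decomposition of the arrangement of $\{N(\bu_i):i\in D\}$ in time depending only on $k$, and for each of the $O(k^c)$ cells that lies inside $\bigcup_{i\in D}N(\bu_i)$ read off from the lookup table the number of input points it contains; accept iff these counts sum to $n$. All four conditions of Theorem~\ref{thm:w1hdesc} are met, so dominating set on intersection graphs defined by $\cF$ is in \Wone.

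I expect the main obstacle to be making the intersection predicate ``$\objects_{\bu}\cap \objects_{\bu'}\neq\emptyset$'' into a genuinely constant-complexity semi-algebraic condition on $(\bu,\bu')$ and then controlling the combinatorial complexity of the resulting arrangement in parameter space: one must cite the right effective quantifier-elimination / arrangement-decomposition bounds (so that the number of cells is polynomial in $k$ with a constant exponent, and each cell is defined by $O(1)$ of the sets) and verify that the degrees and number of polynomials produced stay bounded by a function of $d,\Delta,s,t$ only. The bookkeeping that the preprocessing enumerates exactly those bounded-size tuples needed to reconstruct any cell of any $k$-set arrangement — so that the lookup table has polynomial size yet suffices in the nondeterministic phase — is the part that needs the most care, but it is a direct analogue of the four-circle bound used in the unit-disk proof.
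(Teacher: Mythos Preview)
Your proposal is correct and follows essentially the same approach as the paper: lift to parameter space $\Reals^t$, use quantifier elimination to obtain constant-complexity semi-algebraic ``neighborhood'' regions $N(\bu_i)$ (the paper's $\bR(S)$), decompose via cylindrical algebraic decomposition so that each cell is determined by a bounded tuple of regions, precompute point counts for all such tuples, then guess $k$ indices and verify by table lookup. The paper's full proof spells out the quantifier-elimination step and the CAD cell bound exactly as you anticipate in your ``main obstacle'' paragraph.
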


The proof is very similar to the proof of the special case of unit disks. We give a proof sketch before introducing the machinery required for the full proof.

\begin{proof}[Proof sketch of Theorem~\ref{thm:genW1containment}]
By definition, any set $S\in\cF$ can be specified using $t$ parameters $a_1,\ldots,a_t$. Thus we can represent $S$ by the point $\bp(S) := (a_1,\ldots,a_t)$ in $\Reals^t$. 
Conversely, for a point $(a_1,\ldots,a_t) \in \Reals^t$, let $S(a_1,\ldots,a_t)$ be the corresponding semi-algebraic set. Now we define, for any set $S\in\cF$, a region $\bR(S)$ as follows:
\[
\bR(S) := \{ (a_1,\ldots,a_t) \in \Reals^t : S(a_1,\ldots,a_t) \cap S \neq \emptyset \}.
\]
Thus for any two sets $S_1,S_2\in\cF$ we have that $S_1\cap S_2\neq\emptyset$
if and only if $\bp(S_1) \in \bR(S_2)$. 

Now consider a set $\cS \subset \cF$ of $n$ sets from the family~$\cF$. We
proceed in a similar way as in the proof of Theorem~\ref{thm:dsudg_in_w1},
where the sets $\bR(S)$ for $S\in\cS$ play the same role as the radius-2
disks in that proof. Consider any subset $\cD\subseteq \cS$, and note that
$\cD$ is a dominating set if and only if $\bigcup_{S\in \cD} \bR(S)$ contains
the point set $\{ \bp(S) | S\in \cS\}$.

Now we can decompose the arrangement defined by $\{ \bR(S) : S\in \cD\}$ into
polynomially many cells using a so-called \emph{cylindrical
decomposition}~\cite{Arnon84}; note that such a decomposition is made
possible by the fact that the regions $\bR(S)$ are semi-algebraic. (This decomposition plays the role of the vertical decomposition in
the proof for unit disks.) Each cell of the cylindrical decomposition is
defined by at most $t'$ regions $\bR(S)$, for some $t'=O(1)$. Thus, for each
subset of at most $t'$ regions $\bR(S)$, we compute all cells that arise in
the cylindrical decomposition of the subset. The number of possible cells is polynomial in $n$.

In the preprocessing phase, we compute for each possible cell the number of
points $\bp(S)$ contained in it, and store the results in a lookup table. The
next phase of the algorithm is the same as for unit disks: we guess a
solution, compute the cells in the cylindrical decomposition of the
corresponding arrangement, and check using the lookup table if the guessed
solution is a dominating set.
\end{proof}

\begin{definition}[First order formula]
A first-order formula (of the language of ordered fields with
parameters in $\mathbb{R}$) is a formula that can be constructed according to
the following rules:
\begin{enumerate}
\item  If $Q\in \Reals[X_{1}, \cdots, X_{d}]$, then $Q\star 0$, where
$\star\in\{=, >, <\}$, is a formula.
\item If $\phi$ and $\psi$ are formulas, then their conjunction
$\phi\wedge\psi$, their disjunction $\phi\vee\psi$, and the negation
$\neg\phi$ are formulas.
\item  If $\phi$ is a formula and $x$ is a variable ranging over $\mathbb{R}$, then $(\exists x) \phi$ and $(\forall x) \phi$ are formulas
\end{enumerate}
\end{definition}

A formula that can be obtained by using only the first two of the above steps
is called \emph{quantifier-free}. The \emph{realization} of a formula $\Phi$
with $t$ free variables is $\reali(\Phi)=\{x\in \Reals^t
\;|\;\Phi(x)=\true\}$. A \emph{semi-algebraic set} is the realization of a
quantifier-free first-order formula. Our definition of a semi-algebraic family is equivalent to saying that it is defined by a first order formula $\Phi(a,x)$ in the following sense: let $\Psi_a(x)=\Phi(a,x)$ for all $a\in \Reals^t$. Then the sets in the family are $S(a)=S(a_1,\dots,a_t)=\reali(\Psi_a(x))\quad (a\in \Reals^t)$. Note that $\Phi$ must have constant complexity, i.e., the degree of the polynomials, the number of inequalities and the number of variables is a constant.

\begin{proof}[Proof of Theorem~\ref{thm:genW1containment}]
Throughout this proof, $h_1,h_2,\dots$ denote computable functions.
Consider a quantifier-free first order formula $\Phi(a,x)$ that defines our $t$-parameterized family. The condition $S(a) \cap S(a') \neq \emptyset$ is equivalent to
\[\Pi(a,a') = \Big((\exists x):\Phi(a,x) \;\wedge \; \Phi(a',x)\Big).\]
We can use quantifier elimination~\cite{Basu2006} on the formula $\Pi(a,a')$ to gain an equivalent quantifier-free formula $\Xi(a,a')$, where the maximum degree $\Delta$ and the number of inequalities $s$ are at most singly exponential in $t$. Let $h_1$ be a singly exponential function for which $h_1(t)\geq \max\{2t,\Delta,s\}$. For any $a\in \Reals^{t}$, let $\Theta_a(a')=\Xi(a,a')$.

Consider an intersection graph corresponding to a finite set of parameter values $A\subset \Reals^{t}$. Let $V$  be the vertex set of our intersection graph: $V = \{S(a) \;|\; a \in A\}$. Furthermore, let $\bR$ be the function that assigns any $S(a) \in V$ the shape $\bR(S(a))=\reali(\Theta_a)$.

In the verification phase, we will guess a dominating vertex set $\cD$, and we are going to apply a cylindrical algebraic decomposition to the semi-algebraic sets $\bR(\cD)=\{\bR(S)\;|\; S\in D\}$. Each semi-algebraic set in $\bR(\cD)$ can have at most $h_2(t)$ connected components.

Every cell in the cylindrical decomposition can be defined by a tuple of connected components, and the tuple size depends only on the dimension and the degree of polynomials used for our semi-algebraic sets. In our case, the dimension is $t$ and the degrees are at most $h_1(t)$, therefore the tuple size can be upper bounded by a function of $t$, let it be $h_3(t)$.

Our algorithm is as follows. In the preprocessing phase, we enumerate all possible cells in $(nh_2(t))^{h_3(t)}=poly(n)$ time, and in each cell in $O(n)$ time we count the number of points covered from $A$, and save the information in a lookup table.

Next, we make the $k$ guesses, that correspond to the vertex identifiers of the dominating set $\cD$. We create the cylindrical algebraic decomposition for $\bR(S) \quad \{S\in \cD\}$. For each cell covered by $\bigcup_{S\in \cD} \bR(S)$, we sum the entries from the lookup table. We accept if the result is $n$. Note that the guesses, the decomposition and the lookup together take $h_4(k)$ time.
\end{proof}

\mypara{W[1]-hardness for simple polygon translates.}{ We generalize a proof by Marx~\cite{Marx06} for the \Wone-hardness of dominating set in unit square/unit disk graphs. Our result is based on the observation that many 2-dimensional shapes share the crucial properties of unit squares when it comes to the type of intersections needed for this specific construction. We prove the following theorem.}

\begin{theorem}
The dominating set problem is \Wone-hard for intersection graphs of the
translates of a simple polygon in $\Reals^2$.
\end{theorem}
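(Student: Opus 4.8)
The plan is to adapt Marx's \Wone-hardness reduction for unit squares (from \cite{Marx06}), which goes from the \Wone-complete \textsc{Grid Tiling} problem. The reduction for unit squares exploits two geometric features: (a) an axis-parallel ``grid'' of unit squares can be placed so that a square in one cell overlaps exactly the squares in the $O(1)$ cells around it, giving a local, grid-like intersection pattern; and (b) by choosing representative squares inside each cell whose precise offsets encode a value, one can force that a dominating set must pick, in each cell, a square whose encoded value is consistent with its horizontal and vertical neighbours — exactly the \textsc{Grid Tiling} constraints. My claim is that any fixed simple polygon $P$ (non-degenerate, so with positive area) has ``enough room'' in these two respects for the same construction to run.

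**Key steps, in order.** First I would fix a simple polygon $P$ and isolate a small axis-parallel ``probe'' region: since $P$ is a simple polygon, its boundary contains a point $p$ with a supporting direction that is not axis-parallel, or more simply, $P$ contains a small axis-parallel square $B$ in its interior near a boundary vertex, together with a direction along which translating $P$ slides $B$ across a neighbouring copy's probe region monotonically. Concretely, I would pick a boundary edge of $P$ (or, if $P$ is a triangle, a vertex) and work in a neighbourhood where the local geometry of $\partial P$ is linear; this gives a controllable one-parameter family of translates whose pairwise intersection pattern, restricted to the relevant scale, behaves exactly like that of unit squares. Second, using this local behaviour I would rebuild Marx's gadget: for an $n\times n$ \textsc{Grid Tiling} instance with values in $[m]$, I place, for each cell $(i,j)$ and each candidate value $\ell\in S_{i,j}$, a translate $P_{i,j,\ell}$ whose offset from the ``cell anchor'' encodes $\ell$ in two coordinates. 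I add, as in Marx, auxiliary ``guard'' polygons between horizontally and vertically adjacent cells so that a size-$k$ dominating set ($k=\Theta(n^2)$) must select exactly one $P_{i,j,\ell}$ per cell, and selection of $P_{i,j,\ell}$ and $P_{i,j+1,\ell'}$ is jointly compatible with the guards only if $(\ell,\ell')$ respects the horizontal constraint, and symmetrically for vertical constraints. Third I would verify the two directions: a solution to \textsc{Grid Tiling} yields a dominating set of size $k$ by picking the corresponding polygons (guards are dominated by the chosen cell polygons), and conversely any dominating set of size $k$ is forced into the ``one polygon per cell, locally consistent'' shape and hence yields a tiling. Since the reduction is polynomial and $k$ depends only on $n$ (the \textsc{Grid Tiling} parameter), \Wone-hardness follows.

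**The main obstacle** is establishing the geometric ``genericity'' lemma: that an \emph{arbitrary} simple polygon — which may have many edges, non-convexities, and no axis-parallel features at all — nonetheless admits a bounded-scale neighbourhood in which translated copies intersect in the clean, grid-like, monotone fashion that unit squares enjoy. For a generic polygon one edge of $\partial P$ has an interior point with a unique, non-vertical, non-horizontal supporting line, and near that point $\partial P$ looks locally like a single line segment; translating $P$ by a vector along a suitably chosen direction then sweeps this local segment across a fixed small window of another copy in a way whose containment/disjointness is governed by a single linear inequality. Turning this into precisely the inequalities Marx needs — and ruling out ``spurious'' intersections coming from the global shape of $P$ at the chosen scale (which requires scaling the grid spacing down relative to $\mathrm{diam}(P)$, or up, so that only the intended local contacts occur) — is the technical heart of the argument. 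Everything after that lemma is a faithful transcription of \cite{Marx06}.
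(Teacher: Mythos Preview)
Your high-level strategy matches the paper's: reduce from \textsc{Grid Tiling} by transcribing Marx's unit-square gadgetry, after first proving a geometric lemma that any simple polygon $P$ admits a grid of translates with the required intersection pattern. The paper makes this precise by defining a shape to be \emph{square-like} if there exist base vectors $\bb_1,\bb_2$ and offset vectors $\bu_1(n),\bu_2(n)$ satisfying four explicit conditions (the $(2n^2{+}1)^2$ offset copies form a clique; horizontal/vertical neighbours intersect iff the relevant offset index is $\le 0$; copies two or more grid steps apart are disjoint), and then proves that every simple polygon is square-like. So your identification of ``the genericity lemma is the technical heart'' is exactly right.

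The gap is in how you propose to prove that lemma. Your plan is to work \emph{locally} near a single boundary edge, using that $\partial P$ is linear there, and then somehow scale the grid spacing to kill spurious intersections. This runs into trouble on two counts. First, Marx's construction needs \emph{two} independent directions with the thresholded intersection behaviour, and your sketch only produces one; ``symmetrically for vertical constraints'' hides the real work. Second, scaling the grid spacing up or down does not by itself avoid spurious intersections: for the block structure you need translates that \emph{just barely} touch in each base direction, and that is a global constraint on $P$, not a local one. The paper's solution is global from the start: take $\bb_1$ to be (a perturbation of) a \emph{diameter} of $P$, so that $P$ and $\bb_1+P$ touch at a vertex $q$; perturb so the contact is vertex-to-edge rather than vertex-to-vertex, and let the direction of that edge be $s_1$. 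Then lay out the infinite row $\{k\bb_1+P\}_{k\in\Ints}$, find its \emph{vertical width} (top and bottom supporting lines), and define $\bb_2$ analogously from the extremal vertices $p',q'$ there, obtaining a second edge direction $s_2$. The offset vectors are then $\bu_1=\tfrac{\eps}{2n^2}s_2$ and $\bu_2=\tfrac{\eps}{2n^2}s_1$. The diameter and strip-width choices are exactly what guarantee no spurious intersections for $|k|+|\ell|\ge 2$; the vertex-to-edge perturbation is what makes the intersection condition a single linear inequality in the offset index. Your axis-parallel probe regions are not needed and would not help for a polygon with no axis-parallel features.
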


Our proof uses the same global strategy as Marx's proof~\cite{Marx06} for the \Wone-hardness of dominating set for intersection graphs of squares. (We give an overview of the proof in Appendix~\ref{app:Marx_generic}.) To apply this proof strategy, all we need to prove is that the family of shapes for which we want to prove \Wone-hardness has a certain property, as defined next.

We say that a shape $S\subseteq \Reals^2$ is
\emph{square-like} if there are two base vectors $\bb_1$ and $\bb_2$ and for
any $n$ there are two small offset vectors $\bu_1=\bu_1(n)$ and $\bu_2=\bu_2(n)$ with the following properties. Define $S(i,j) := S + i\bu_1 + j\bu_2$ for all $-n^2\leq i,j \leq n^2$, and consider the set $\cK := \{S(i,j) : -n^2\leq i,j\leq n^2\}$. Note that $\cK$ consists of $(2n^2+1)^2$ translated copies of $S$ whose reference points from a $(2n^2+1)\times(2n^2+1)$ grid. Also note that $S = S(0,0)$. For the shape $S$ to be square-like, we require the following properties:
\begin{itemize}
\item $\cK$ is a clique in the intersection graph, i.e.,\\
\[\text{for all} -n^2 \!\leq \!i,\!j \!\leq\! n^2 \text{ we have:} \; S\cap S(i,j) \neq \emptyset.\]
\item ``Horizontal'' neighbors intersect only when close:\\
\[\text{for all} -n^2 \!\leq\! j\! \leq\! n^2 \text{ we have:} \; S\cap (\bb_1 + S(i,j)) \neq \emptyset \iff i \leq 0.\]
\item ``Vertical'' neighbors intersect only when close: \\
\[\text{for all} -n^2 \!\leq\! i \!\leq\! n^2 \text{ we have:} \; S\cap (\bb_2 + S(i,j)) \neq \emptyset \iff j \leq 0.\]
\item Distant copies of $\cK$ are disjoint:\\
\[\text{for all} -n^2 \!\leq\! i,\!j,\!i'\!,\!j'\!\leq\! n^2 \text{ we have:} \;|k|+|\ell|\geq 2 \Rightarrow S(i,j)\cap (k\bb_1 + \ell\bb_2 + S(i',j')) = \emptyset.\]
\end{itemize}
Moreover, we require that each of the vectors can be represented on $O(\log n)$ bits. It is helpful to visualize a square grid, with unit side lengths $\bb_1$ and $\bb_2$, where we place the centers of unit squares with small offsets compared to the grid points. We are requiring a very similar intersection structure here. See Figure~\ref{fig:pgrid} for an example of a good choice of vectors.

\begin{figure}
\begin{center}
\includegraphics[scale=0.8]{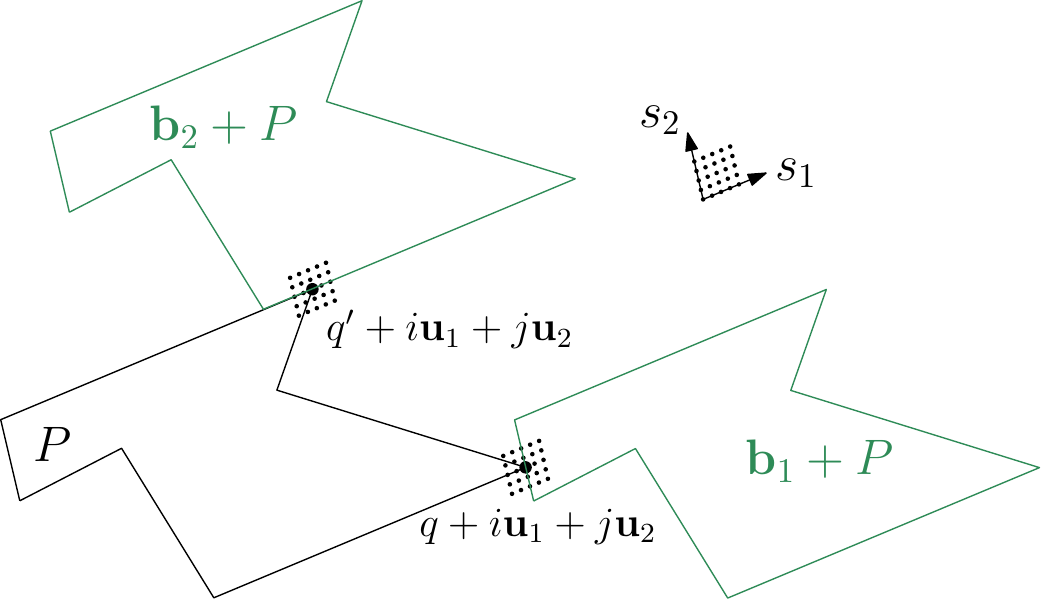}
\end{center}
\caption{A good choice of $\bb_1,\bb_2$ and offsets.}
\label{fig:parallelogram}
\end{figure}

Since the above properties are sufficient for the construction given by
Marx~\cite{Marx06}, we only need to prove the following theorem.

\begin{theorem}\label{thm:allissquare}
Every simple polygon is square-like.
\end{theorem}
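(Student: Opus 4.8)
The plan is to produce, for every simple polygon $S$, a choice of base vectors $\bb_1,\bb_2$ and offset vectors $\bu_1(n),\bu_2(n)$ satisfying all four bullet points. The key geometric quantity to exploit is the following: for a fixed direction $d$ and a convex-like ``contact distance'' we want a function $\lambda^*(d)$ that is the largest $t$ such that $S\cap(td+S)\neq\emptyset$. For a simple polygon this is a finite positive number in every direction (since $S$ is bounded and has nonempty interior), and the set $\{v : S\cap(v+S)\neq\emptyset\}$ is exactly the Minkowski difference $S\oplus(-S)$, which for a simple polygon is a bounded set equal to a finite union of translated copies of Minkowski sums of edges/faces — in particular its boundary in any fixed direction is piecewise linear, so $\lambda^*(d)$ is attained and varies controllably with $d$.

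\textbf{Step 1: reduce to a single direction.} First I would pick \emph{one} direction, say the direction of some edge or, more robustly, a generic direction $d$, and set $\bb_1$ to be a vector in direction $d$ of length strictly greater than $\lambda^*(d)$ — but only \emph{slightly} greater, so that $\bb_1 = (\lambda^*(d)+\delta)\,d$ for a small $\delta>0$. The point of the ``horizontal neighbors intersect only when close'' condition is that $S$ and $\bb_1 + S(i,j) = S + i\bu_1 + j\bu_2 + \bb_1$ should intersect iff $i\le 0$. So the offsets $\bu_1,\bu_2$ must be chosen so that moving $n^2$ steps of $\bu_1$ (and any number of $\bu_2$) in the $+\bb_1$ direction just barely pushes the copy out of contact when $i>0$, and pulls it back into contact when $i\le 0$. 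Concretely: let $\bu_1$ have a tiny component along $d$ (of order $\delta/n^2$) so that $n^2$ such steps change the effective translation along $d$ by about $\delta$, enough to cross the threshold $\lambda^*(d)$; the same for the vertical direction with $\bb_2$ and a second generic direction. The clique condition just asks that the total offset $i\bu_1+j\bu_2$ over the whole grid stays inside $S\oplus(-S)$ near the origin, which holds automatically if $\bu_1,\bu_2$ are $O(1/n^2)$ and chosen so $n^2(|\bu_1|+|\bu_2|)$ is smaller than the ``inradius'' of $S\oplus(-S)$ at the origin.

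\textbf{Step 2: the disjointness-of-distant-copies condition.} This is where I expect the real work. We need that for $|k|+|\ell|\ge 2$ the copies $S(i,j)$ and $k\bb_1+\ell\bb_2+S(i',j')$ are disjoint for \emph{all} choices of the grid indices. Equivalently, $k\bb_1+\ell\bb_2 + (i'-i)\bu_1 + (j'-j)\bu_2 \notin S\oplus(-S)$ whenever $|k|+|\ell|\ge 2$. Since the $\bu$-perturbations are $O(1/n^2)$ and the indices range over $[-n^2,n^2]$, the perturbation term is $O(1)$ but bounded by a constant we control; so it suffices that the lattice points $k\bb_1+\ell\bb_2$ with $|k|+|\ell|\ge 2$ are \emph{uniformly far} from $S\oplus(-S)$, by more than that controlled constant. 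This is the crux: we must choose the two generic directions and the lengths $\approx\lambda^*$ so that the lattice generated by $\bb_1,\bb_2$ is ``just big enough'' that the first shell ($|k|+|\ell|\le 1$) can touch $S\oplus(-S)$ but the second shell is safely outside. For a general simple polygon $S\oplus(-S)$ is a (not necessarily convex) polygon, so I would first replace it by a bounding convex set, e.g. circumscribe $S\oplus(-S)$ in a parallelogram $P$ spanned by directions $d_1,d_2$; then choosing $\bb_1,\bb_2$ slightly longer than the $P$-widths in those directions makes $2\bb_1$, $2\bb_2$, $\bb_1\pm\bb_2$, etc., provably outside $P\supseteq S\oplus(-S)$ with a definite gap, while $\bb_1,\bb_2$ themselves can be tuned (via the small offsets, as in Step 1) to sit right on the boundary. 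The main obstacle is doing this simultaneously for both directions with a \emph{single} lattice while keeping all the inequalities strict with a margin that survives the $O(1/n^2)\times n^2$ perturbations.

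\textbf{Step 3: bit complexity.} Finally, all the vectors must have $O(\log n)$-bit representations. Since $S$ is a fixed polygon, $\lambda^*(d)$, the parallelogram $P$, and the base vectors $\bb_1,\bb_2$ are fixed algebraic numbers of constant bit-size; only the offsets $\bu_1(n),\bu_2(n)$ depend on $n$, and they are of the form (fixed vector)$/n^2$ or a nearby rational of denominator $\mathrm{poly}(n)$, hence $O(\log n)$ bits. A short rounding argument — replace each ideal real coordinate by a rational approximation within, say, $1/n^4$, which is far finer than all the margins established in Steps 1–2 — closes this out. I would present the proof by first defining $M := S\oplus(-S)$ and recording its elementary properties (bounded, full-dimensional, polygonal), then fixing $P$ and the two directions, then writing down $\bb_1,\bb_2,\bu_1,\bu_2$ explicitly and verifying the four bullets in turn, with the distant-copies bullet getting the most detailed estimate.
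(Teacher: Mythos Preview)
Your outline has the right skeleton---work with the set $M:=S\oplus(-S)$, place $\bb_1,\bb_2$ on its boundary, take offsets of order $1/n^2$, and bound the far copies via a circumscribed parallelogram---but it misses the one genuinely delicate point, and without it the argument does not go through.

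\medskip
\textbf{The gap: decoupling the two indices.} The horizontal bullet requires
\[
S\cap\bigl(\bb_1+i\bu_1+j\bu_2+S\bigr)\neq\emptyset\ \Longleftrightarrow\ i\le 0\qquad\text{for \emph{every} }-n^2\le j\le n^2,
\]
so the truth value must be independent of $j$; symmetrically the vertical bullet must be independent of $i$. In your Step~1 you only stipulate that $\bu_1$ has a small component along the direction $d$ of $\bb_1$; you say nothing about how $\bu_2$ interacts with the boundary of $M$ near $\bb_1$. For a generic $\bu_2$ the point $\bb_1+j\bu_2$ drifts transversally across $\partial M$ as $j$ varies, so the equivalence will fail for some $j$. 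The circumscribed-parallelogram picture in your Step~2 does not rescue this: if the $d_2$-side of the parallelogram touches $M$ at a single vertex (the generic situation), then moving from $\bb_1$ along $d_2$ immediately leaves $M$, so even $i=0$ fails for $j\neq 0$. What is actually needed is that $\bu_2$ be \emph{parallel to an edge of $\partial M$ through $\bb_1$} and, simultaneously, $\bu_1$ parallel to an edge of $\partial M$ through $\bb_2$---two constraints that must be satisfied at once with $\bu_1,\bu_2$ linearly independent. You never arrange this.

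The paper's proof is built precisely around this decoupling. It picks $\bb_1$ as a slight perturbation of a diameter of $S$ so that the contact between $S$ and $\bb_1+S$ occurs along a polygon side (direction $s_1$), and then \emph{defines} $\bu_2$ parallel to $s_1$; this guarantees $j$-independence at the horizontal contact by construction. For $\bb_2$ it does not take an arbitrary second direction but rather looks at the top and bottom extrema of the infinite strip $\bigcup_{k\in\mathbb Z}(k\bb_1+S)$, again perturbs so the contact is along a side (direction $s_2$), and sets $\bu_1$ parallel to $s_2$. A separate small argument ensures $s_1\not\parallel s_2$. The strip construction also gives the diagonal disjointness (the $|k|=|\ell|=1$ case) almost for free, whereas in your parallelogram approach this case needs its own estimate.

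\medskip
\textbf{A secondary slip.} You set $\bb_1=(\lambda^*(d)+\delta)d$, which puts $\bb_1$ strictly \emph{outside} $M$; then $S\cap(\bb_1+S)=\emptyset$, contradicting the $i=0$ case of the horizontal bullet. You want $\bb_1$ on or just inside $\partial M$, with $\bu_1$ pointing outward.

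\medskip
In short: your Minkowski-difference framing and the bit-complexity remark are fine, but the proof hinges on choosing the offset directions tangent to $\partial M$ at the two contact points simultaneously, and that choice---together with the perturbation that forces the contacts to lie on edges rather than vertices---is exactly what your plan is missing.
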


We give a short overview of the proof technique. First, we would like to
define a ``horizontal'' direction, i.e., a good vector $\bb_1$. A natural
choice would be to select a diameter of the polygon (see $b_0$ in Figure~\ref{fig:vecfind}), however that would result in $S$ and $\bb_1+S$
intersecting each other at vertices. That would pose a severe restriction on
the offset vectors; therefore, we use a perturbed version of a diameter,
making sure that the intersection of $S$ and $\bb_1+S$ is realized by a
polygon side from at least one party. The direction of this polygon side also
defines a suitable direction of the offset vector $\bu_2$: because of the second
property, choosing $\bu_2$ to be parallel to this direction ensures the
independence with respect to the choice of $j$.

Next, we define the other base vector $\bb_2$. This definition is based on
laying out an infinite sequence of translates horizontally next to each other
(Figure~\ref{fig:secvecfind}). We want a translate of this sequence to
touch the original sequence in a ``non-intrusive'' way: small perturbations of
$\bb_2+S$ should only intersect $S$, but stay disjoint from $\bb_1+S$ or
$-\bb_1+S$. This is fairly easy to achieve; again with a small perturbation of
our first candidate vector we can also ensure that the intersection between
$\bb_2+S$ and $S$ is not a vertex-vertex intersection. Finally, a suitable direction
for the offset vector $\bu_1$ is given by the polygon side taking part in the
intersection between $\bb_2+S$ and $S$.

We now give the formal proof of Theorem~\ref{thm:allissquare}.

\begin{figure}
\begin{center}
\includegraphics[scale=0.7]{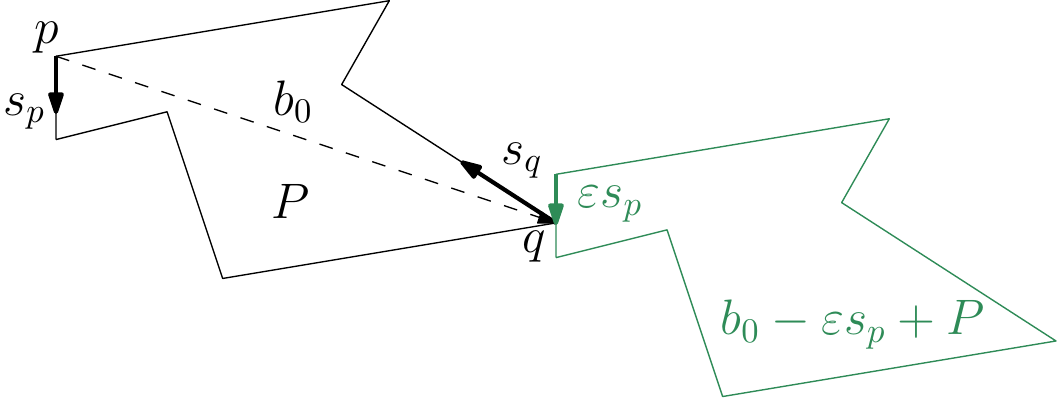}
\end{center}
\caption{Defining $\bb_1$.}
\label{fig:vecfind}
\end{figure}

\begin{figure}
\begin{center}
\includegraphics[scale=0.7]{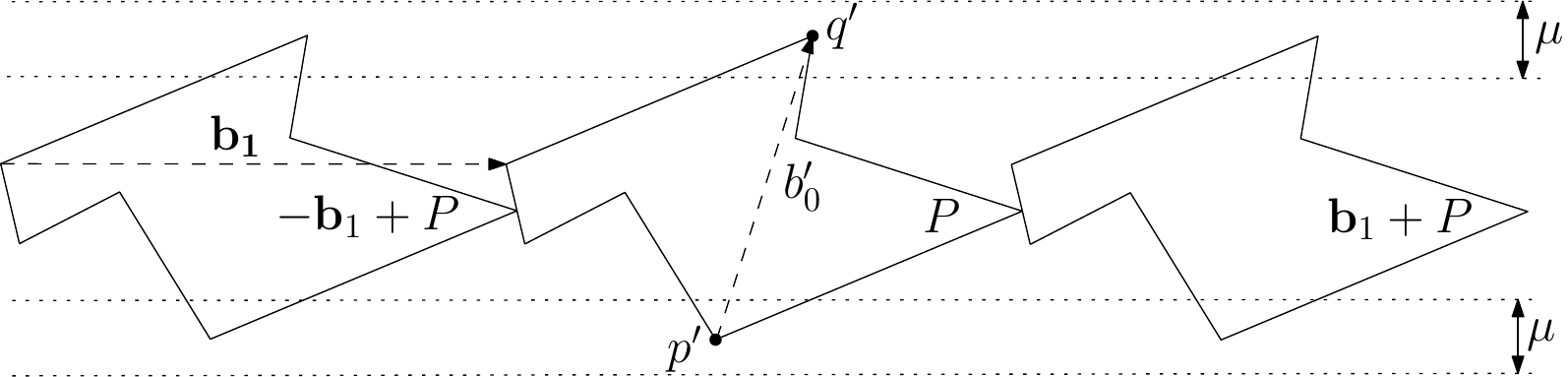}
\end{center}
\caption{Defining $\bb_2$.}
\label{fig:secvecfind}
\end{figure}



\begin{proof}

Let $P$ be a simple polygon, and let $p$ and $q$ be two endpoints of a
diameter of $P$. Let $b_0=q-p$. Since $P$ is a polygon, both $p$ and $q$ are
vertices. Let $s_p$ and  $s_q$ be unit vectors in the direction of the side of
$P$ that follows vertex $p$ and $q$ in the counter- clockwise order. Let
$\eps>0$ be a small number to be specified later. Consider the intersection of
$P$ and the translate $b_0+\eps s_q+P$. If $\eps$ is small enough, then
depending on the angle of $s_p$ and $s_q$, this intersection is either the
point $b_0+\eps s_q$, or part of the side with direction $s_q$, or it is an
intersection of positive area. Figure~\ref{fig:vecfind}
illustrates the third case. In the first case, let $\bb_1=b_0+\eps s_p$; in
the second and third case, let $\bb_1=b_0-\eps s_q$. Furthermore, let
$s_1=\bb_1-b_0$. We will later use $s_1$ to define the offset vector $\bu_2$.

Imagine that $\bb_1$ is the horizontal direction, and consider the set
$P_{\infty}=\{k\bb_1+P\;|\;k \in \mathbb{Z}\}$ (Figure~\ref{fig:secvecfind}). Its top and bottom boundary are infinite periodic
polylines, with period length $|\bb_1|$. Take a pair of horizontal lines that
touch the top and bottom boundary. By manipulating $\eps$ in the definition of
$\bb_1$, we can achieve a general position in the sense that both of these
lines touch the respective boundaries exactly once in each period, moreover,
there is a value $\mu$, such that there are no vertices other than the
touching points in the $\frac{\mu}{2}$-neighborhood of the touching lines. Let
$p'$ and $q'$ be vertices touched by the bottom and top lines inside $P$, and
let $b'_0=q'-p'$. Similarly as before, the direction of the sides following
$p'$ and $q'$ counter-clockwise are denoted by $s_{p'}$ and $s_{q'}$. If the
intersection of $P$ and the translate $b'_0+\eps s_{q'}+P$ has zero area, then
let $\bb_2=b'_0-\eps s_{q'}$; otherwise, (if the area of the intersection is
positive), let $\bb_2=b'_0+\eps s_{p'}$. We denote by $s_2$ the difference
$\bb_2-b'_0$. If $s_2$ and $s_1$ are parallel, then we can define $s_2$
similarly, by replacing the sides $s_{p'}$ and $s_{q'}$ with the sides that
follow $p'$ and $q'$ in clockwise direction. The new direction of $s_2$ will
not be parallel to the old one, therefore it will not be parallel to $s_1$.

\begin{figure}
\begin{center}
\includegraphics[scale=0.7]{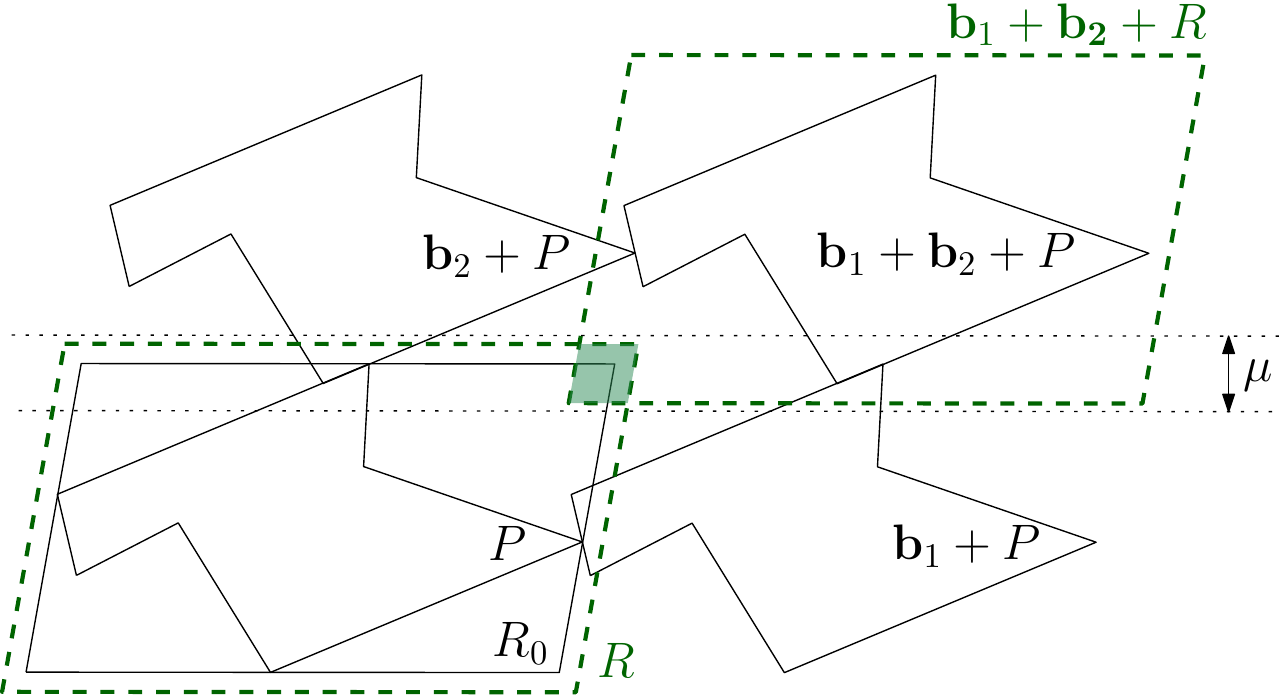}
\end{center}
\caption{Part of the grid $k\bb_1+\ell\bb_2+P$.}
\label{fig:pgrid}
\end{figure}

We need to choose the values of $\bu_1$ and  $\bu_2$. Let $\bu_1 =
\frac{\eps}{2n^2} s_2$ and let $\bu_2 = \frac{\eps}{2n^2}
s_1$. We claim that if $\eps$ is small enough, then $P$ is
square-like for the vectors $\bb_1, \bb_2, \bu_1, \bu_2$. It is easy to check
that for a small enough value of $\eps$, the first condition is satisfied,
namely that $P\cap P(i,j) \neq \emptyset \; \text{ for all }
-n^2 \leq i,j \leq n^2$.

Next, we show that for $i\leq 0$, the intersection of $P$ and $\bb_1 +
P(i,j)$ is non-empty. Consider the small grid of points $q -
i\bu_1 - j\bu_2, \; -n^2\leq i,j\leq n^2$ (see
Figure~\ref{fig:parallelogram}). This grid fits into a parallelogram whose
sides are parallel to $s_2$ and $s_1$. Notice that if $\eps$ is small enough,
then $q - i\bu_1 - j\bu_2$ for all $-n^2 \leq i <0$ and $-n^2\leq j \leq n^2$ is
contained in $\bb_1+P$, thus the intersection $P\cap (\bb_1 + i\bu_1 + j\bu_2
+ P)$ is non-empty if $i\leq 0$. Moreover, (if $\eps$ is small enough), then
no other type of intersection can happen by moving $\bb_1+P$ slightly: the
only sides that can intersect $\bb_1 + P(i,j)$ from $P$ are 
adjacent to $q$. Therefore, if $q$ is outside $\bb_1 + P(i,j)$,
then the intersection is empty --- which is true for $i>0$. A similar
argument works for the intersection of $P$ and $\bb_2 + P(i,j)$.

Let $R_0$ be a minimum area parallelogram containing $P$ whose sides are
parallel to $\bb_1$ and $\bb_2$ (see Figure~\ref{fig:pgrid}). Notice that the
side lengths of this parallelogram are at most $|\bb_1|+\eps$ and
$|\bb_2|+\eps$ respectively. Let $\bar{P}=\bigcup \cK = \bigcup_{-n^2\leq i,j
\leq n^2} P(i,j)$. Notice that $\bar{P}$ is contained in the slightly larger
rectangle $R$ that we get by extending all sides of $R_0$ by $2\eps$.

Now consider the rectangle translates $k\bb_1+\ell\bb_2+R$. Since $\eps$ is
small enough, if either $k$ or $\ell$ is at least two then $R \cap (k\bb_1 +
\ell\bb_2 + R) = \emptyset$, so specifically, $\bar{P}$ is disjoint from
$k\bb_1 + \ell\bb_2 + \bar{P}$. It remains to show that $\bar{P}$ is disjoint
from $k\bb_1 + \ell\bb_2 + \bar{P}$ if $|k|=|\ell|=1$. Consider $\bar{P}$ and
$\bb_1+\bb_2+\bar{P}$ for example. They could only intersect inside $R \cap
(\bb_1+\bb_2+R)$; however, if $\eps < \frac{\mu}{4}$, then this is contained
in the $\mu$ wide horizontal strip defined earlier. By the definition of this
strip, it also means that there is an intersection point $q$ that is within
distance $O(\eps)$ from both $q'$ and $\bb_1+\bb_2+p'$. This would mean that
$|\bb_1|=O(\eps)$, and thus it can be avoided by choosing a small enough $\eps$.

Finally, we note that all restrictions on the value of $\eps$ are dependent
on the polygon $P$ itself, thus the length of the short vectors $\bu_1$ and
$\bu_2$ is $\Omega(n^{-2})$, and a precision of $O(n^{-2})$ is sufficient for
all the vectors, thus the vectors can be represented on $O(\log n)$ bits.
\end{proof}

We remark that it is fairly easy to further generalize the above theorem to
other families of objects, we can allow objects with certain curved boundaries
for example. A simple example of an object that is not square-like is a pair
of perpendicular disjoint unit segments: for any choice of offset vectors, the
set $\cK$ does not form a clique (as required by the first property of
square-like objects).

\mypara{\Wtwo-hardness for convex polygons.} {We conclude with the following
hardness result; the reduction uses a basic geometric idea that has been used
for hardness proofs before~\cite{Har09,Marx15}. Note the crucial difference
between the setting in this theorem, where the polygons defining the
intersection graph can be different and have description complexity dependent
on $n$, versus the previous settings (where we  had constant description
complexity and some uniformity among the object descriptions).}

\begin{theorem}
The dominating set problem is \Wtwo-hard for intersection graphs of convex
polygons.
\end{theorem}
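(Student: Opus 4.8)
The plan is to reduce from the \Wtwo-complete \textsc{Dominating Set} problem on general graphs, encoding the abstract graph geometrically using long thin convex polygons whose intersection pattern mimics adjacency. Given an instance $(G,k)$ with $V(G)=\{v_1,\dots,v_n\}$, I would place the vertices in convex position on a circle (or along a convex curve), so that for each vertex $v_i$ there is a point $p_i$; the idea used in \cite{Har09,Marx15} is that a convex polygon can be built to pass very close to precisely a prescribed subset of these points. Concretely, for each vertex $v_i$ I create a "vertex object" $P_i$, a tiny convex polygon (essentially a point-like shape) sitting at $p_i$. For each vertex $v_i$ I also create a "selector object" $Q_i$: a convex polygon, with roughly $O(n)$ vertices, that is designed to intersect exactly those $P_j$ with $v_j \in N[v_i]$ (the closed neighborhood of $v_i$), and to intersect no other vertex object and no other selector object. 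Because the $p_j$ are in convex position, "the convex hull of a subset of the $p_j$, slightly fattened" is a convex polygon that contains exactly that subset of points and can be made disjoint from the remaining points and from the other selector polygons by routing it through the interior; this is the standard trick. Then $G$ has a dominating set of size $k$ if and only if the intersection graph of $\{P_1,\dots,P_n\}\cup\{Q_1,\dots,Q_n\}$ has a dominating set of size $k$ (one takes the $Q_i$ for the chosen dominators; conversely any dominating set can be pushed to use only selector polygons, since replacing a chosen $P_j$ by $Q_j$ can only help).

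The steps, in order: (1) embed $v_1,\dots,v_n$ as points $p_1,\dots,p_n$ in convex position; (2) define the vertex objects $P_i$ as arbitrarily small convex polygons around $p_i$, pairwise disjoint; (3) for each $i$, define $Q_i$ as a convex polygon — e.g.\ a slightly inflated convex hull of $\{p_j : v_j\in N[v_i]\}$, then perturbed so that it avoids all $p_\ell$ with $v_\ell\notin N[v_i]$ and so that distinct $Q_i,Q_{i'}$ are disjoint — and verify that $Q_i\cap P_j\neq\emptyset \iff v_j\in N[v_i]$; (4) prove the equivalence of the two dominating-set instances, with the same parameter $k$ (note: one must argue the $P_i$'s themselves never need to be chosen and that each $P_i$ is dominated iff some $Q_j$ with $v_j\in N[v_i]$ is chosen iff $v_i$ is dominated in $G$, and that the $Q_i$'s are dominated because they intersect $Q_i\cap P_i\ni$ something… in fact $Q_i$ intersects $P_i$, so once we have chosen dominators for all $v_i$, every $Q_i$ is dominated via the dominator of $v_i$); (5) check the reduction is polynomial-time and parameter-preserving, hence an \FPT-reduction, giving \Wtwo-hardness.

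The main obstacle is step (3): simultaneously achieving (a) $Q_i$ hits exactly the right vertex objects, (b) $Q_i$ misses the wrong ones, and (c) the $Q_i$ are pairwise disjoint (or at least that their mutual intersections don't corrupt the domination structure). The inflated-convex-hull idea gives (a) and (b) for a single $Q_i$ easily, but making all $n$ selector polygons pairwise disjoint while each still reaches its prescribed point set is the delicate part — this is exactly where the "basic geometric idea used before" does the work: one threads the polygons through disjoint thin corridors near the convex curve, using the convex-position assumption so that the hull of any subset can be realized as a thin polygon hugging the boundary arc spanned by that subset, with disjoint offsets for different $i$. I would likely allow a mild relaxation: let the $Q_i$ intersect each other arbitrarily, and instead argue directly that this does not affect the reduction, because any dominating set $D'$ of the constructed graph can be converted to one of size $\leq|D'|$ consisting only of selector polygons (replace each chosen $P_j$ by $Q_j$, and each chosen $Q_i$ stays), and such a set dominates all $P_j$ iff the corresponding vertex set dominates $G$ — this sidesteps disjointness of the $Q_i$ and is cleaner. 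Finally, one should note (as the theorem statement implicitly flags) that this construction genuinely uses polygons with $\Theta(n)$ vertices and non-uniform shapes, which is why it does not contradict Theorem~\ref{thm:genW1containment}.
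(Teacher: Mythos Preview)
Your approach works and is close in spirit to the paper's, though organized differently. The paper reduces from \textsc{Dominating Set} on split graphs (known to be \Wtwo-hard by Raman and Saurabh) and then shows that every split graph $G=(C\cup I,E)$ is an intersection graph of convex polygons: take a regular $2|I|$-gon $Q'$ and the regular $|I|$-gon $Q$ on its alternate vertices; each vertex of $I$ becomes a tiny triangle sitting in one of the ``ears'' of $Q'\setminus Q$, and each $v\in C$ becomes $Q$ together with the ears of its neighbours in $I$. You instead reduce directly from general \textsc{Dominating Set}, with tiny vertex-objects $P_i$ and convex-hull selector-objects $Q_i$; in effect you are re-deriving the standard ``general graph $\to$ split graph'' reduction for dominating set and realizing it geometrically in one shot. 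The paper's route is shorter because it outsources that first step and yields an explicit construction without perturbation arguments.

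One step in your forward direction needs a cleaner justification. You must show that every unchosen $Q_\ell$ is dominated by some chosen $Q_d$, and your sentence ``$Q_i$ intersects $P_i$, so \ldots\ every $Q_i$ is dominated via the dominator of $v_i$'' does not supply the reason (domination is not transitive through $P_i$). The actual reason, which your construction does provide, is that if $v_d\in N[v_\ell]$ then both $Q_d$ and $Q_\ell$ contain the point $p_\ell$ (since $v_\ell\in N[v_d]$ and $v_\ell\in N[v_\ell]$), hence $Q_d\cap Q_\ell\neq\emptyset$. Once this is stated, your relaxation allowing arbitrary extra $Q_i$--$Q_j$ intersections is indeed harmless, and the equivalence goes through as you outline.
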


\begin{proof}
A \emph{split graph} is a graph that has a vertex set which can be
partitioned into a clique $C$ and an independent set $I$. It was shown by
Raman and Saurabh~\cite{Raman2008} that dominating set is \Wtwo-hard on split graphs.
Thus it is sufficient to show that any split graph can be represented as the
intersection graph of convex polygons.

Let $G=(C\cup I,E)$ be an arbitrary split graph. Let $Q'$ be a regular
$2|I|$-gon and let $Q$ be the regular $I$-gon defined by every second
vertex of $Q'$. Notice that $Q' \setminus Q$ consists of small triangles, any
subset of which together with $Q$ forms a convex polygon.

The polygons corresponding to $I$ are small equilateral triangles, placed in
the interior of each small triangle of $Q'\setminus Q$. The polygon
corresponding to a vertex $v\in C$ whose neighborhood in $I$ is $N_I(v)$ is
the union of $Q$ and the small triangles corresponding to the vertices of
$N_I(v)$.

In this construction, the polygons corresponding to $C$ all intersect (they
all contain $Q$), and the polygons corresponding to $I$ are all disjoint. Finally, for any pair of vertices $u\in C$ and $v \in I$ the polygon of $u$ contains the polygon of $v$ if and only if $uv\in E$.
\end{proof}

\section{Conclusion}

We have classified the parameterized complexity of dominating set in intersection graphs defined by sets of various types in $\Reals^1$ and $\Reals^2$. More precisely, in $\Reals^1$, we gave a classification for the case when the intersection graph is defined by the translates of a fixed pattern that consists of points and intervals that is independent of the input. In $\Reals^2$, we have identified a fairly large class of \Wone-complete instances, namely, if our intersection graph is defined by a subset of a constant description complexity family of semi-algebraic sets. Even though our results hold for a large class of geometric intersection graphs, there are still some open problems. In particular, the complexity of dominating set on the following types of intersections graphs is unknown.

\begin{itemize}
\item translates of a 1-dimensional pattern that contains two unit intervals at some distance $\ell$ (given by the input) (\FPT vs. \Wone?)
\item translates of a 2-dimensional pattern that contains two disjoint perpendicular unit intervals (\FPT vs. \Wone?)
\item $n$ translates of a regular $n$-gon (\Wone vs. \Wtwo?)
\end{itemize}

\bibliography{new_w_cont}

\newpage

\appendix

\section{Dominating Set in the triangular grid}\label{app:triangular_grid}

\begin{theorem}\label{thm:DSNPH}
Dominating set is \NP-hard on (induced) triangular grid graphs.
\end{theorem}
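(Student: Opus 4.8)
The plan is to reduce from a known \NP-hard problem on graphs that has a planar/grid-like structure, most naturally \textsc{Dominating Set on planar graphs of bounded degree}, or to mimic the classical Garey--Johnson style reduction used for \NP-hardness of \textsc{Dominating Set} on (square) grid graphs. First I would recall the standard trick: given an arbitrary planar graph $G$ (or a planar cubic graph), embed it in the plane so that vertices sit on grid points and edges run along grid paths, then subdivide each edge a suitable number of times so that replacing edges by paths does not change the size of a minimum dominating set in a controllable way (the usual ``subdivide each edge $3t$ times changes the domination number by exactly $t$ per edge'' lemma). The resulting graph is an induced subgraph of a fine square grid. The issue is that we want an induced subgraph of the \emph{triangular} grid, not the square grid, and as the excerpt itself stresses, triangular grid graphs do not contain square grid graphs as a subclass — so I cannot simply quote the square-grid result.

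The key step is therefore to carry out the embedding directly inside the triangular grid. Here I would use that the triangular grid contains, as a spanning subgraph of any axis-aligned region, a square-grid-like sublattice (take the sublattice generated by two of the three grid directions, ignoring the diagonal edges); wires and vertex gadgets can be routed along these two directions. The diagonal edges present in the triangular grid but absent from the square sublattice are the main nuisance: any gadget I build must be ``robust'' to these extra adjacencies, i.e. I must ensure that no diagonal edge creates an unintended domination shortcut between two wires that run close to each other. I would handle this by spacing all wires and gadgets at least two grid steps apart in the sublattice, so that a single diagonal edge never connects two distinct gadget components; the standard spacing arguments from grid-embedding proofs then go through. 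I would then state and prove the edge-subdivision lemma \emph{for the triangular grid}: subdividing an edge into a path of length $3t+1$ (a path realized along grid directions, again with enough clearance so its only triangular-grid neighbours are its own consecutive vertices) increases the domination number by exactly $t$; summing over all $m$ edges gives $\gamma(G') = \gamma(G) + tm$ for the subdivided graph $G'$.

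So the overall structure is: (1) start from \textsc{Dominating Set} on planar graphs of maximum degree $3$, which is \NP-hard; (2) fix a planar rectilinear grid embedding with vertex gadgets of constant size, blown up so all features are far apart; (3) reinterpret this embedding inside the triangular grid using two of the three lattice directions, checking that the extra diagonal adjacencies are harmless because of the clearance; (4) use the subdivision lemma for the triangular grid to relate $\gamma$ of the constructed induced triangular-grid graph to $\gamma(G)$; (5) conclude that a polynomial-size induced triangular grid graph has domination number $\le k'$ iff $G$ has domination number $\le k$, giving \NP-hardness (membership in \NP being trivial). The main obstacle I expect is step (3)--(4): verifying that the diagonal edges of the triangular grid never short-circuit the gadgets or the subdivided wires, which forces a careful choice of gadget geometry and spacing and is exactly the place where the triangular case genuinely differs from the textbook square-grid proof.
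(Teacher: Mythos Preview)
Your plan is plausible and could be made to work, but it takes a genuinely different route from the paper. The paper reduces from \textsc{Planar 3-SAT}, not from \textsc{Dominating Set} on planar cubic graphs. It builds, for each variable, a cycle (of length a multiple of $3$) and for each literal a path connecting that cycle to the clause vertex; after embedding the incidence graph in the triangular grid and scaling twice (by $5$ to resolve $\pi/3$ angles into an induced subgraph, then by $6$ to make all path lengths multiples of six), it attaches a small ``ear'' to each variable cycle and a length-one detour or a cycle-connection shift to each literal path depending on polarity. The analysis is then a direct mod-$3$ argument on cycles and paths (Lemma~\ref{lem:ds_lowerbound}): a minimum dominating set must pick every third vertex along each cycle/path, and the ear plus the detours force one of two residue classes on each cycle, which encodes a truth value and propagates to dominate the clause vertex iff the clause is satisfied. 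Your approach instead treats the whole source graph as a single object and appeals to a global subdivision lemma $\gamma(G')=\gamma(G)+\sum_e t_e$; this lemma is correct but is less standard than you suggest and needs a nontrivial exchange argument (on an edge with exactly $t_e$ internal dominators the pattern is rigid and forces one endpoint into the dominating set, and edges with excess can be charged to a vertex cover of the ``excess'' subgraph). Both strategies must confront the same geometric obstacle you identify---the anti-diagonal adjacencies of the triangular grid at turns and at degree-$3$ junctions---and both solve it by scaling and local rerouting. The paper's gadget-based route has the advantage that the domination count is computed from scratch on explicit cycles and paths, so correctness does not hinge on a separate subdivision lemma; your route is more modular (it black-boxes the planar-cubic hardness) but shifts the burden to proving that lemma and to controlling all path lengths modulo~$3$ in the embedding.
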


We do a reduction from \textsc{Planar 3-SAT}. Consider a formula of $n$
variables and $m$ clauses, and let $G$ be the graph associated with the
formula: the vertices are the clauses and the variables, and the edges connect
a variable and a clause if the clause contains the variable.

For each variable $x$, we introduce a cycle of length $m+1$, where we put the
outgoing edges from $x$ consecutively on neighboring vertices of the cycle in
the same cyclic order as defined by a fixed planar drawing of the planar
graph $G$.
We obtain a new
planar graph $G'$ this way, which has maximum degree $3$. We now consider a
triangular grid drawing of $G'$, where the original vertices of $G'$ are
assigned to grid points, and the edges are replaced with grid paths. There is
such a drawing onto a grid of polynomial size, and it can be computed in
polynomial time ~\cite{Kant92}.

The edges are either edges of a \emph{variable cycle}, or they are on a path
from a variable cycle to a clause, which we will call \emph{literal path}s.

We scale this drawing by a factor of five and do some local modifications (that we describe below) in
order to make this an induced triangular grid graph. At each vertex or bend
where there are edges used whose angle is $\frac{\pi}{3}$, we modify the
surrounding area as seen in Figure~\ref{fig:resol_inc}. It is easy to see
that one can reroute the paths around a vertex or around a bend in a larger
hexagon in all the other cases.

\begin{figure}
\begin{center}
\includegraphics[trim={1cm 5.5cm 1cm 3.5cm},clip]{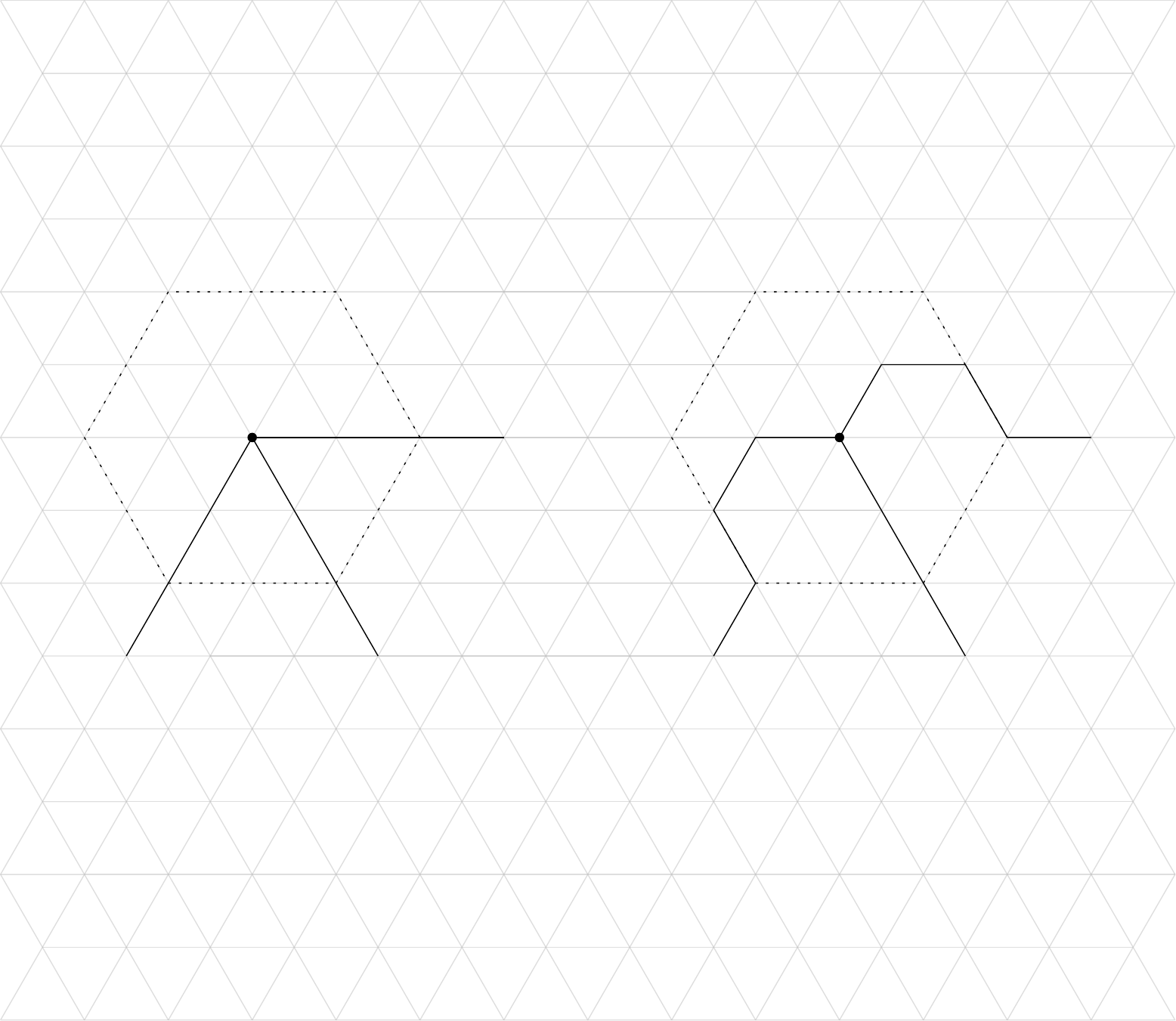}
\end{center}
\caption{Increasing angular resolution in the neighborhood of a vertex}
\label{fig:resol_inc}
\end{figure}

Next, we introduce another scaling by a factor of six so that each grid edge
becomes at least six consecutive edges in the same direction. This way we get
a graph which is still a spanning subgraph of the triangular grid (due to the
angular resolution being $\frac{2\pi}{3}$), and the path between any pair of
original vertices is represented by a path whose length is a multiple of six.

\begin{figure}
\begin{center}
\includegraphics[trim={1.5cm 4.5cm 6cm
0.3cm},clip,scale=0.6]{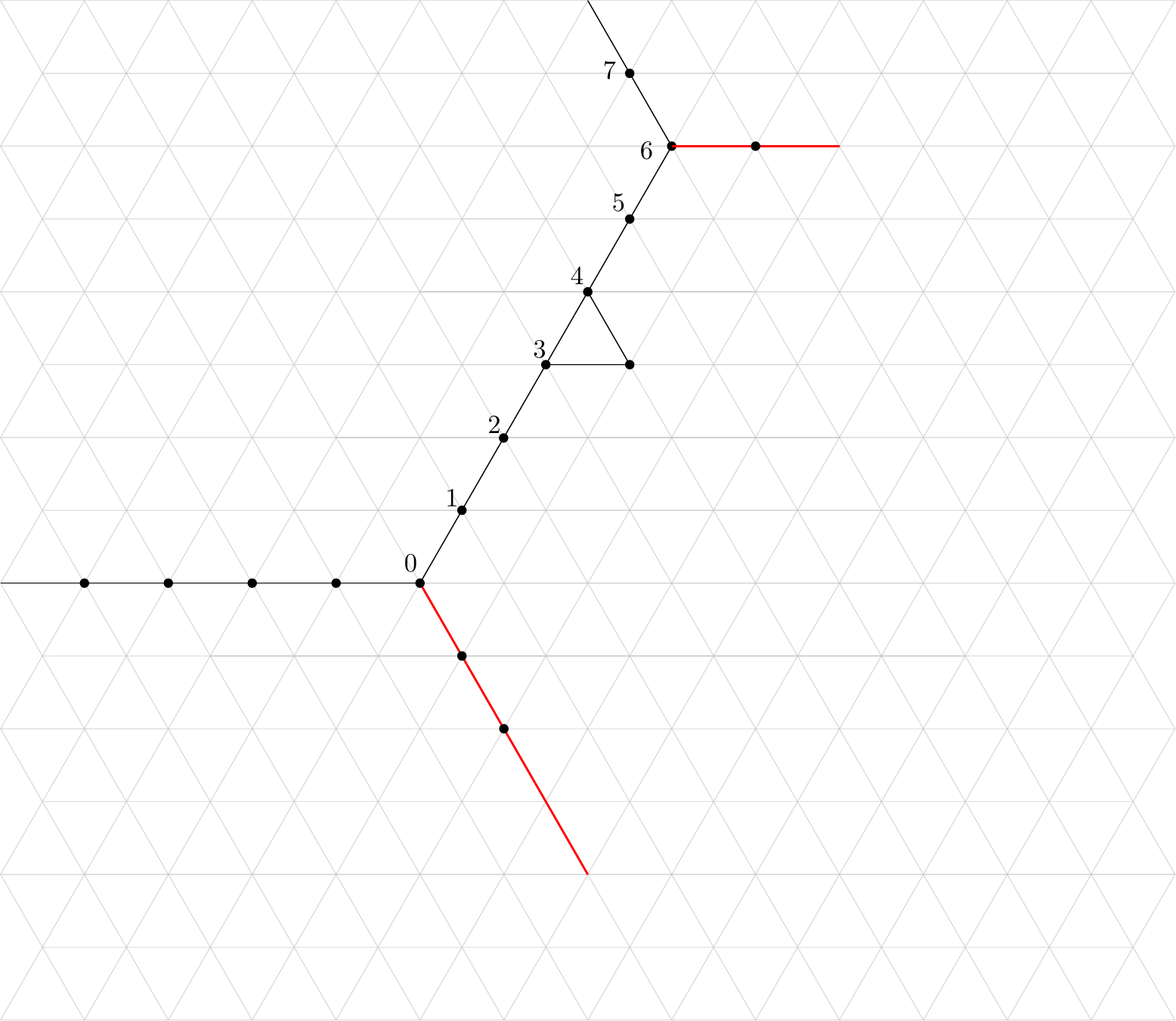}
\end{center}
\caption{Adding an ear to a variable cycle. The black edges are edges of the
variable circle, the red edges are on the literal paths.}
\label{fig:cycle_modify}
\end{figure}

\begin{figure}
\begin{center}
\includegraphics[trim={1cm 7.2cm 1cm 4.3cm},clip,scale=0.7]{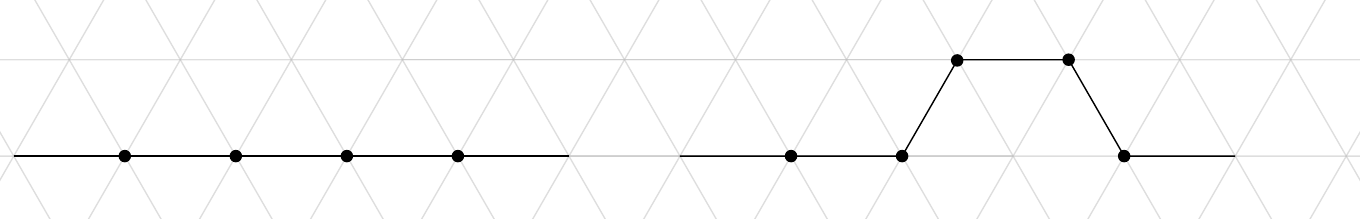}
\end{center}
\caption{Adding a detour to the path of a positive literal}
\label{fig:path_modify}
\end{figure}

\begin{figure}
\begin{center}
\includegraphics[trim={1.5cm 4.5cm 6cm 0.3cm},clip,scale=0.6]{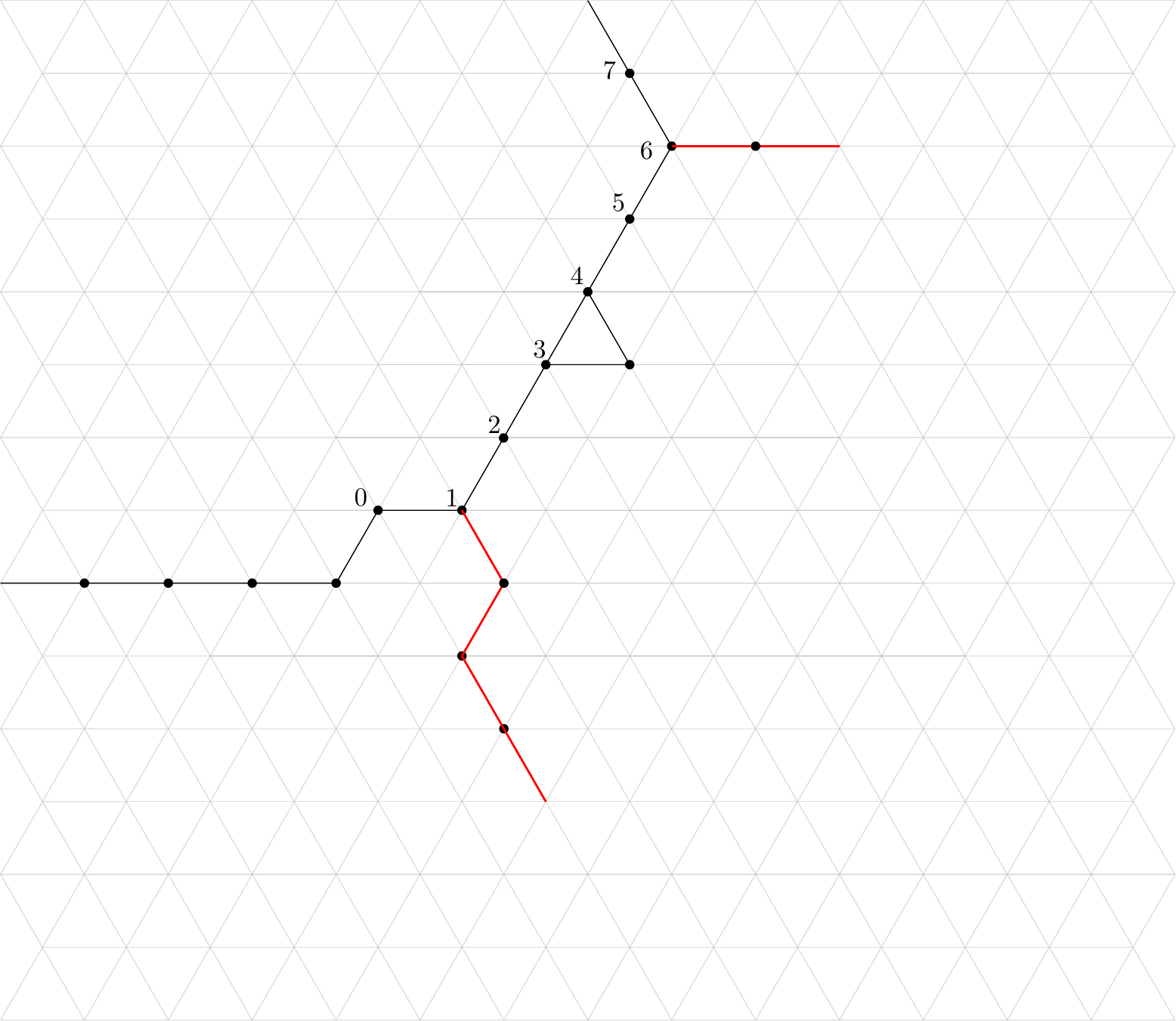}
\end{center}
\caption{Modifying the cycle connection of a negative literal. The black
edges are edges of the variable circle, the red edges are on the literal
paths.}
\label{fig:neg_modify}
\end{figure}

We make some extra modifications to the resulting graph. On each variable
cycle, we number the vertices consecutively in way that the paths starting
from the cycle get a number that is divisible by six (note that the two
scalings make this possible), for a cycle $C$ of length $6k$, these vertices
are denoted by $C_0,C_6, \dots, C_{6(k-1)}$. We add an \emph{ear} (a path of
length two) that connects $C_3$ and $C_4$; see Figure~\ref{fig:cycle_modify}.
It is easy to check that this can be done by adding a small triangle, and
preserving the spanning triangular grid graph property.

We introduce a small detour of length $1$ on a literal path if the
corresponding literal is positive. The detour is made as depicted in
Figure~\ref{fig:path_modify} to preserve the induced property. Finally, if a
path corresponds to a negative literal, we introduce another modification at
its connection with the variable circle, which makes the path one longer, and
connects to the circle at a point whose number is congruent to one modulo
three. See Figure~\ref{fig:neg_modify} for an example. Notice that the cycle
length remains unchanged, and the induced property is preserved. We also
number each literal path $P$ of length $3k+1$ consecutively, starting from the
vertex $P_0$ which is shared with the variable cycle, to the last vertex
$P_{3k+1}$, which is the vertex representing the clause.

The resulting (induced) triangle grid graph is denoted by $H$. Notice that
$H$ can be constructed in polynomial time from a given 3-CNF formula.

Given a satisfying assignment, we can define a dominating set of size
$\frac{|E(H)|-3m}{3}$. This can be done by selecting vertices of the form
$C_{3\ell}$ on cycles of true variables and $C_{3\ell+1}$ on cycles of false
variables. On the path of true literals, vertices of the form $P_{3\ell}$ are
selected, and on the paths of false literals, vertices of the form $P_{3\ell+2}$
are selected. It is routine to check that this is a dominating set; we note
that the clause vertices are dominated by the last inner point of the path of
their true literal(s).

\begin{lemma}\label{lem:ds_lowerbound}
A dominating set of $H$ has at least $\frac{|E(H)|}{3}-m$ points; more
specifically, a dominating set covers at least $k$ points of
\begin{itemize}
\item the vertices of a variable cycle of length $3k$;
\item the inner vertices of a variable path of length $3k+1$.
\end{itemize}
\end{lemma}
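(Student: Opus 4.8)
The plan is to establish the lower bound for each of the two local structures (variable cycles and literal paths) separately, and then sum the resulting local inequalities over all such structures, being careful about double-counting at the shared vertices. The key observation is that a dominating set can only ``save'' vertices where a single chosen vertex dominates several others; in the triangular grid paths and cycles used in our construction, every vertex has few neighbors along the structure, so each chosen vertex covers only a bounded stretch.

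First I would handle a variable cycle $C$ of length $3k$. I want to show that any dominating set $D$ of $H$ contains at least $k$ vertices among $V(C)$. The clean way is a charging or interval argument: think of $D\cap V(C)$ together with the (at most $k$ ``external'' connection points where literal paths or the ear attach) as covering all $3k$ cycle vertices; each vertex of the cycle, if in $D$, dominates itself and its two cycle-neighbors, i.e.\ an arc of length $3$, and an external dominator covers at most one cycle vertex on top of what it would cover anyway. So $3\cdot|D\cap V(C)| \ge 3k$ after accounting for the contribution of vertices outside the cycle. I would phrase this rigorously by noting the cycle has $3k$ vertices, each dominated by a vertex at cycle-distance $\le 1$, and the only vertices at distance $\le 1$ from a cycle vertex that are not on the cycle are the endpoints of the ear and the first vertices of literal paths — but these are placed at carefully chosen positions (the ear spans $C_3,C_4$, literal paths attach at multiples of $6$ or at positions $\equiv 1 \bmod 3$), so the arithmetic works out to force $|D\cap V(C)|\ge k$. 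The ear modification in Figure~\ref{fig:cycle_modify} is exactly what prevents a single external vertex from helping, so I expect the ear placement to be the crux of this half.

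Next I would handle a literal path $P$ of length $3k+1$, with inner vertices $P_1,\dots,P_{3k}$ (excluding $P_0$, shared with the cycle, and $P_{3k+1}$, the clause vertex). Here the claim is that $|D|$ restricted to the inner vertices of $P$ is at least $k$. Again a path of $3k$ inner vertices, each dominated by something within distance $1$; the only ``outside'' dominators are $P_0$ (a cycle vertex) and $P_{3k+1}$ (a clause vertex), plus whatever the detour modification (Figure~\ref{fig:path_modify}) adds — and the detour is designed precisely so that it does not create extra domination shortcuts. So the $3k$ inner vertices need $\lceil (3k - 2)/3\rceil + (\text{possible savings from }P_0,P_{3k+1})$, which I would pin down to exactly $k$. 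Summing the cycle bounds $\sum k_C$ and path bounds $\sum k_P$ and comparing with $|E(H)|$: since $H$ is essentially a disjoint union of these cycles and paths glued at single vertices, $|E(H)| = \sum 3k_C + \sum(3k_P + 1) = 3(\sum k_C + \sum k_P) + m$, giving $|D| \ge \sum k_C + \sum k_P = (|E(H)| - m)/3 = |E(H)|/3 - m$.

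The main obstacle I anticipate is the bookkeeping at the junction vertices: a vertex $C_{6\ell}=P_0$ lies on both a cycle and a literal path, and a clause vertex $P_{3k+1}$ is shared by up to three literal paths, so I must make sure I am not counting the same element of $D$ toward two different local lower bounds, and conversely that an external dominator sitting at such a junction cannot simultaneously help two structures beyond what the local inequalities allow. I would resolve this by proving the two bullet points as genuinely local statements — ``$D$ contains $\ge k$ vertices strictly inside this structure'' — so that the sets of vertices counted for distinct structures are disjoint by construction (inner path vertices of different paths are disjoint; cycle vertices of different cycles are disjoint; and a clause vertex or a $P_0$ is never an inner vertex of any structure). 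With that framing the summation is immediate and the congruence conditions built into the construction (everything a multiple of $3$ up to the careful $+1$ detours and negative-literal shifts) are exactly what make each local bound tight.
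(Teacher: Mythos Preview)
Your charging argument has a genuine gap on the variable-cycle side, and you have misidentified the role of the ear. You write that each cycle vertex in $D$ covers an arc of length~3 and that external dominators contribute ``at most one cycle vertex on top of what it would cover anyway,'' concluding $3\cdot|D\cap V(C)|\ge 3k$. But a variable cycle can have many external attachments (one per literal containing that variable, plus the ear), and each such external vertex, if placed in $D$, legitimately dominates a cycle vertex without contributing to $|D\cap V(C)|$. Your inequality then degrades to $3|D\cap V(C)| + (\text{number of external dominators}) \ge 3k$, which does not yield $|D\cap V(C)|\ge k$. The ear is not what rescues this; the ear is used only in the \emph{other} direction of the argument (forcing a canonical residue class on the cycle in the proof of Theorem~\ref{thm:DSNPH}), and plays no role in the lower bound.

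The paper's proof avoids the whole issue with a one-line packing argument that you are close to but never state: by construction, every external attachment to a cycle or to a literal path sits at a position congruent to $0$ or $1$ modulo~$3$ (literal paths attach at multiples of $6$ or, after the negative-literal shift, at positions $\equiv 1\bmod 3$; the ear touches $C_3,C_4$; the detour on a positive literal path likewise avoids residue~$2$). Hence the vertices $C_{3\ell+2}$ on a cycle and $P_{3\ell+2}$ on a path have their entire closed neighbourhoods inside the structure, and these $k$ closed neighbourhoods are pairwise disjoint. Each must meet $D$, giving $|D\cap V(C)|\ge k$ and $|D\cap\{P_1,\dots,P_{3k}\}|\ge k$ directly, with no arithmetic about external contributions and no appeal to the ear. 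Your final summation paragraph is fine once the two local bounds are established this way; the disjointness you worried about at junction vertices is automatic because the witnesses $C_{3\ell+2}$, $P_{3\ell+2}$ and their neighbourhoods never include $P_0$ or a clause vertex.
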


\begin{proof}
We consider a cycle first. Notice that vertices of the form $3\ell+2$ have
only neighbors inside the cycle; the neighborhoods of these vertices are
disjoint, therefore each of these neighborhoods along with the vertex itself
must contain a distinct dominating point. Since the number of such vertices
is $k$, every dominating set contains at least $k$ dominating points from the
cycle.

A similar argument applies on variable paths. The points of the form $3\ell+2
\quad (\ell=0 \dots k-1)$ only have neighbors that are inner vertices of the
path, and the neighborhoods are disjoint, which means that there must be at
least one dominating point in each of these $k$ neighborhoods.
\end{proof}

\begin{proof}[Proof of Theorem~\ref{thm:DSNPH}]
We have seen that $H$ can be constructed in polynomial time, it is sufficient
to prove that $H$ has a dominating set of size $\frac{|E(H)|}{3}-m$ if and
only if the original 3-CNF formula is satisfiable. We have already
demonstrated that there is a dominating set of this size if the formula is
satisfiable.

Let $D$ be a dominating set of size $\frac{|E(H)|}{3}-m$ in $H$. By
Lemma~\ref{lem:ds_lowerbound}, the number of points of $D$ in a literal path
of length $3k+1$ is exactly $k$, one point from each neighborhood of the
points $3\ell+2\quad (\ell=0 \dots k-1)$. Notice that the clause vertex can
not be in $D$. Suppose that vertex $P_1$ is in $D$. It follows that in the
neighborhood of vertex $P_5$, $P_4$ must be the one in $D$, otherwise $P_3$
would be uncovered. Continuing this pattern along the path we see that the
vertices $P_{3\ell+1}$ from each neighborhood will be in $D$ --- but that 
leaves $P_{3k}$ uncovered. Thus, $P_1$ cannot be in $D$.

Now take a variable cycle of length $3k$. The number of dominating points on
the cycle is exactly $k$, one point in each neighborhood around the points
$3\ell+2$. Notice that the ear means that one of these points is congruent to
$0$ or $1$ modulo three (since the ear vertex has to be dominated by $C_3$ or
$C_4$). We also know that $P_1$ on a connecting literal path $Q$ is not
in $D$, so we can use a similar strategy as on the paths to verify that all
the points of $D$ in $C$ are in fact congruent modulo three. We assign the
variable TRUE if these points are congruent to $0$, and FALSE if they are
congruent to $1$.

Now looking at any clause vertex, it must be dominated by the last inner
vertex $P_{3k}$ of at least one literal path; going back along the path it
follows the points of the form $P_{3\ell}$ are in $D$, thus $P_0 \in D$.
Since this point is on the variable cycle, it means that in case of a
positive literal the variable is true, and in case of a negative literal the
variable is false. Therefore the formula is satisfied by our assignment.
\end{proof}

\section{A generalization of a proof by Marx}~\label{app:Marx_generic}

In this section, we show that the dominating set problem is \Wone-hard for intersection
graphs of square-like objects. We give an overview of the proof by
Marx~\cite{Marx06}, highlighting the main differences for square-like
objects.

Marx uses a reduction from \textsc{Grid Tiling}
\cite{ParamAlg15} (although he does not explicitly state it this way). In a
grid-tiling problem we are given an integer $k$, an integer $n$, and a
collection $\cS$ of $k^2$ non-empty sets $U_{a,b} \subseteq \civ{n} \times
\civ{n}$ for $1 \leq a,b \leq k$. The goal is to select an element
$u_{a,b}\in U_{a,b}$ for each $1 \leq a,b \leq k$ such that
\begin{itemize}
\setlength\itemsep{0em}
\item If $u_{a,b}=(x,y)$ and $u_{a+1,b}=(x',y')$, then $x=x'$.
\item If $u_{a,b}=(x,y)$ and $u_{a,b+1}=(x',x')$, then $y=y'$.
\end{itemize}
One can picture these sets in a $k\times k$ matrix: in each cell $(a,b)$, we
need to select a representative from the set $U_{a,b}$ so that the
representatives selected from horizontally neighboring cells agree in the
first coordinate, and representatives from vertically neighboring sets agree
in the second coordinate.

Let $Q$ be a square-like object, and $\bb_1,\bb_2,\bu_1,\bu_2$ the
corresponding vectors. The original reduction places $k^2$ gadgets, one for
each $U_{a,b}$. A gadget contains 16 blocks of $Q$-translates, labeled
$X_1,Y_1,X_2,Y_2,\dots,X_8,Y_8$, that are arranged along the edges of a
square---see Fig.~\ref{fig:W1-hardness}. For square-like objects, we place
the reference points of the blocks in the grid $k\bb_1+\ell\bb_2, \; k,\ell
\in \Ints$. Initially, each block $X_{\ell}$  contains $n^2$ $Q$-translates,
denoted by $X_\ell(1), \dots, X_\ell(n^2)$ and each block $Y_{\ell}$ contains
$n^2+1$ $Q$-translates denoted by $Y_\ell(0), \dots, Y_\ell(n^2)$. The
argument $j$ of $X_\ell(j)$ can be thought of as a pair $(x,y)$ with $1\le
x,y\leq n$ for which $f(x,y):=(x-1)n+y=j$. Let $f^{-1}(j) = \left(\iota_1(j),
\iota_2(j)\right) = \left(1+\lfloor j/n\rfloor,1+(j \mod n)\right)$.

\begin{figure}[bt]
    \begin{center}
    \includegraphics[scale=0.8]{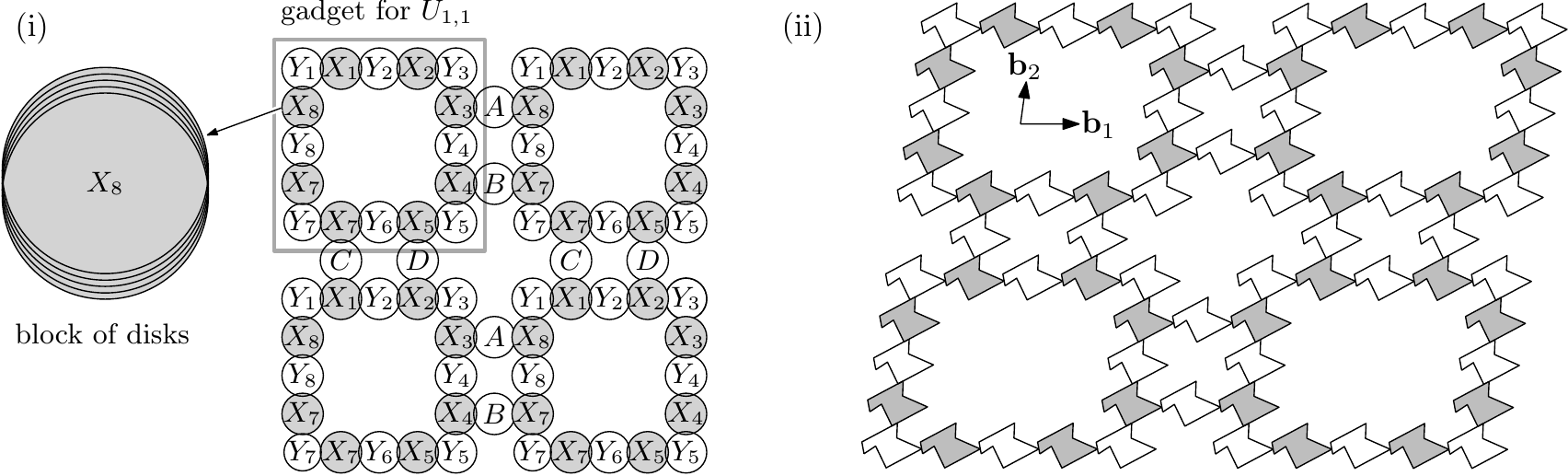}
    \end{center}
    \caption{(i) The construction by Marx for disks. (ii) Construction with translates of square-like shapes.}
    \label{fig:W1-hardness}
\end{figure}
For the final construction, in each gadget at position $(a,b)$, delete all
$Q$-translates $X_\ell (j)$ for each $\ell=1,\dots,8$ and
$\left(\iota_1(j),\iota_2(j)\right) \not\in U_{a,b}$. This deletion ensures
that the gadgets represent the corresponding set $U_{a,b}$. The construction
is such that a minimum dominating set uses only $Q$-translates in the
$X$-blocks, and that for each gadget $(a,b)$ the same $Q$-translate
$X_{\ell}(j)$ is chosen for each $1\leq \ell\leq 8$. This choice signifies a
specific choice $u_{a,b}=(x,y)$. To ensure that the choice for $u_{a,b}$ in
the same row and column agree on their first and second coordinate,
respectively, there are special connector blocks between neighboring gadgets.
The connector blocks are denoted by $A,B,C$ and $D$ in
Fig.~\ref{fig:W1-hardness}, and they each contain $n+1$ $Q$-translates.

\mypara{Defining the blocks.} In every block, the place of each $Q$-translate
is defined with regard to the reference point of the block, $z$. The
reference point of each $Q$-translate is of the form
$z+\alpha\bu_1+\beta\bu_2$ where $\alpha$ and $\beta$ are integers. We say
that the \emph{offset} of this $Q$-translate is $(\alpha,\beta)$. The offsets
of $X$ and $Y$-blocks are defined as follows.

\begin{center}
\begin{tabular}{ll}
offset$(X_1(j))=( j,-\iota_2(j))$ & offset$(Y_1(j))=(j+0.5,j+0.5)$ \\

offset$(X_2(j))=( j, \iota_2(j))$ & offset$(Y_2(j))=(j+0.5,-n)$ \\

offset$(X_3(j))=(-\iota_1(j),-j)$ & offset$(Y_3(j))=(j+0.5,-j-0.5)$ \\

offset$(X_4(j))=( \iota_1(j),-j)$ & offset$(Y_4(j))=(-n,-j-0.5)$ \\

offset$(X_5(j))=(-j, \iota_2(j))$ & offset$(Y_5(j))=(-j-0.5,-j-0.5)$ \\

offset$(X_6(j))=(-j,-\iota_2(j))$ & offset$(Y_6(j))=(-j-0.5,n)$ \\

offset$(X_7(j))=( \iota_1(j), j)$ & offset$(Y_7(j))=(-j-0.5,j+0.5)$ \\

offset$(X_8(j))=(-\iota_1(j), j)$ & offset$(Y_8(j))=(n,j+0.5)$ \\
\end{tabular}
\end{center}

We remark some important properties. First, two $Q$-translates can intersect
only if they are in the same or in neighboring blocks. Consequently, one
needs at least eight $Q$-translates to dominate a gadget. The second
important property is that $Q$-translate $X_\ell(j)$ dominates exactly
$Y_{\ell}(j),\dots,Y_{\ell}(n^2)$ from the ``previous'' block $Y_\ell$, and
$Y_{\ell+1}(0),\dots,Y_{\ell+1}(j-1)$ from the ``next'' block $Y_{\ell+1}$.
This property can be used to prove the following key lemma.

\begin{lemma}[Lemma 1 of~\cite{Marx06}]\label{lem:Mgadget}
Assume that a gadget is part of an instance such that none of the blocks
$Y_i$ are intersected by $Q$-translates outside the gadget. If there is a
dominating set $\Delta$ of the instance that contains exactly $8k^2$
$Q$-translates, then there is a \emph{canonical} dominating set $\Delta'$
with $|\Delta'| = |\Delta|$, such that for each gadget $\cG$, there is an
integer $1 \leq j^G \leq n$ such that $\Delta'$ contains exactly the
$Q$-translates $X_1(j^G), \dots, X_8(j^G)$ from $\cG$.
\end{lemma}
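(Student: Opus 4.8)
The approach is to adapt Marx's proof of his Lemma~1 in~\cite{Marx06}; the only thing that really needs to be re-checked is that every geometric fact his argument invokes is a consequence of $Q$ being square-like (together with the tabulated offsets), rather than of $Q$ being a literal square. From the four defining properties of square-likeness and the block definitions one first establishes three structural facts, each a short calculation with the offset vectors $\bu_1,\bu_2$ using only that they are short relative to $\bb_1,\bb_2$ in the precise sense encoded by square-likeness:
(a) the translates inside any single block form a clique;
(b) two translates in distinct blocks can intersect only if those blocks are equal or consecutive in the cyclic order $\dots,X_{\ell-1},Y_\ell,X_\ell,Y_{\ell+1},\dots$; and
(c) the exact domination pattern already quoted, namely that $X_\ell(j)$ dominates precisely the suffix $Y_\ell(j),\dots,Y_\ell(n^2)$ of $Y_\ell$ and the prefix $Y_{\ell+1}(0),\dots,Y_{\ell+1}(j-1)$ of $Y_{\ell+1}$ --- together with the symmetric statements for how $Y$-translates dominate the two neighbouring $X$-blocks and how $X$-translates reach into the connector blocks $A,B,C,D$, and with the fact that the connectors are reachable only from $X$-blocks. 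This is the only place the geometry enters.

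Given (a)--(c), the argument proceeds in two steps. Counting: by (b) and the hypothesis that no $Y$-block of a gadget is met by a translate outside that gadget, each $Y$-block is dominated only from within the gadget; since $Y_\ell(n^2)$ can be dominated only from $X_\ell$ or from within $Y_\ell$, and $Y_\ell(0)$ only from $X_{\ell-1}$ or from within $Y_\ell$, a covering argument over the $8$-cycle $Y_1,\dots,Y_8$ shows that each gadget contains at least eight translates of $\Delta$. As $|\Delta|=8k^2$ and there are $k^2$ gadgets, equality holds everywhere: each gadget carries exactly eight translates and no translate lies in a connector block. Purification: using (a), (c) and the fact that the connector blocks can be dominated only from $X$-blocks, one shows that the eight translates of a gadget may be taken to lie one in each $X_\ell$ and none in any $Y_\ell$ --- any translate sitting in some $Y_\ell$ is exchanged for a translate of $X_\ell$ (or $X_{\ell-1}$) that dominates at least the same vertices, since an $X$-block translate dominates its whole block together with a suffix and a prefix of the two neighbouring $Y$-blocks. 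Iterating these exchanges yields a dominating set $\Delta'$ with $|\Delta'|=|\Delta|$ containing exactly one translate $X_\ell(j_\ell)$ from each $X_\ell$ of each gadget and nothing else.

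Finally, index alignment. Fix a gadget and let $j_\ell$ be the index of its translate in $X_\ell$ (indices mod $8$). By (c), the only members of $\Delta'$ dominating $Y_\ell$ are $X_\ell(j_\ell)$, covering the suffix $Y_\ell(j_\ell),\dots,Y_\ell(n^2)$, and $X_{\ell-1}(j_{\ell-1})$, covering the prefix $Y_\ell(0),\dots,Y_\ell(j_{\ell-1}-1)$; for these to cover all of $Y_\ell(0),\dots,Y_\ell(n^2)$ we must have $j_{\ell-1}\geq j_\ell$. Reading this around the cycle gives $j_1\geq j_2\geq\dots\geq j_8\geq j_1$, whence $j_1=\dots=j_8=:j^{\cG}$, which is exactly the canonical form claimed. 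The main obstacle is the purification step: one has to verify that each exchange preserves the domination of every already-dominated vertex --- in particular of the connector blocks and of the $X$-blocks themselves --- while not increasing the size; this is the delicate part of Marx's argument, and it carries over without change once (a)--(c) are in hand.
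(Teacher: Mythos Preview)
The paper does not actually prove this lemma: it is stated with the attribution ``Lemma~1 of~\cite{Marx06}'' and the surrounding text only records the two structural facts (your (b) and (c)) and then says ``This property can be used to prove the following key lemma.'' So there is nothing in the paper to compare your argument against beyond those two stated properties, and your sketch is precisely the standard reconstruction of Marx's proof from them. Your counting step, purification step, and the cyclic inequality $j_{\ell-1}\geq j_\ell$ leading to $j_1=\cdots=j_8$ are all correct and match Marx's reasoning; the paper simply defers all of this to~\cite{Marx06}.

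One small remark: the paper's statement writes $1\leq j^{\cG}\leq n$, but as your argument (and the block definitions) make clear, the index actually ranges over $1,\dots,n^2$; this is a typo in the paper, not an issue with your proof.
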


In the gadget $G_{a,b}$, the value $j$ defined in the above lemma represents
the choice of $s_{a,b}=\left(\iota_1(j),\iota_2(j)\right)$ in the grid tiling
problem. Our deletion of certain $Q$-translates in $X$-blocks ensures that
$\left(\iota_1(j),\iota_2(j)\right) \in U_{a,b}$. Finally, in order to get a
feasible grid tiling, gadgets in the same row must agree on the first
coordinate, and gadgets in the same column must agree on the second
coordinate. These blocks have $n+1$ $Q$-translates each, with indices
$0,1,\dots,n$. We define the offsets in the connector gadgets the following
way.

\begin{center}
\begin{tabular}{ll}
offset$(A_j)=(-j-0.5,-n^2-1)$ & offset$(B_j)=(j+0.5,n^2+1)$ \\

offset$(C_j)=(n^2+1,-\iota_2(j))$ & offset$(D_j)=(-n^2-1,\iota_2(j))$ \\
\end{tabular}
\end{center}

Using this definition, it is easy to prove the following lemma.

\begin{lemma}\label{lem:Mconnect}
Let $\Delta$ be a canonical dominating set. For ``horizontally'' neighboring
gadgets $G$ and $H$ representing $j_G$ and $j_H$, the $Q$-translates of the
connector block $A$ are dominated if and only if $\iota_1(j_G)\le
\iota_1(j_H)$; the $Q$-translates of $B$ are dominated if and only if
$\iota_1(j_G)\ge \iota_1(j_H)$. Similarly, for vertically neighboring blocks
$G'$ and $H'$, the $Q$-translates of block $C$ are dominated if and only if
$\iota_2(j_{G'})\leq \iota_2(j_{H'})$; the $Q$-translates of $D$ are
dominated if and only if $\iota_2(j_{G'})\ge \iota_2(j_{H'})$.
\end{lemma}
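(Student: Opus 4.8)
The plan is to exploit that a canonical dominating set spends its entire budget of $8k^2$ translates inside the $X$-blocks, so each connector block must be dominated ``for free'' by the $X$-translates already chosen in the two gadgets it joins. First I would record that, by Lemma~\ref{lem:Mgadget}, a canonical set $\Delta$ contains exactly the eight translates $X_1(j^G),\dots,X_8(j^G)$ from each gadget $G$ and nothing else; in particular $\Delta$ contains no translate of any $Y$-block or of any connector block. By the ``distant copies are disjoint'' property of square-like shapes, a $Q$-translate can intersect only translates lying in its own block or in one of the four $\bb_1$- or $\bb_2$-neighbouring blocks. Reading off the placement of reference points on the $k\bb_1+\ell\bb_2$ grid from Figure~\ref{fig:W1-hardness}, the only blocks neighbouring connector block $A$ that contain a chosen translate are one $X$-block of $G$ and one $X$-block of $H$; call these chosen translates $X_p(j_G)$ and $X_q(j_H)$. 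Hence $A$ is dominated by $\Delta$ if and only if each of $A_0,\dots,A_n$ is hit by $X_p(j_G)$ or by $X_q(j_H)$.

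Next I would plug the explicit offsets into the square-like intersection rule for the single pair of $\pm\bb_1$-neighbouring blocks that is involved. The $\bu_1$-coordinate of $A_i$ is $-(i+\tfrac12)$, while the $\bu_1$-coordinate of $X_p(j_G)$ (resp.\ $X_q(j_H)$) depends on $j_G$ (resp.\ $j_H$) only through $\iota_1(j_G)$ (resp.\ $\iota_1(j_H)$). The ``$i\le 0$''-type rule for neighbouring blocks therefore turns ``$X_p(j_G)$ hits $A_i$'' into a linear inequality between $i$ and $\iota_1(j_G)$, and likewise for $H$; the half-integer shifts in the offsets guarantee that equality never occurs, so the cut-offs are clean. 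One checks from the offset table that $X_p(j_G)$ hits precisely the suffix $\{A_i : i\ge \iota_1(j_G)\}$ and $X_q(j_H)$ precisely the prefix $\{A_i : i<\iota_1(j_H)\}$, so the two together leave nothing uncovered exactly when $\iota_1(j_G)\le\iota_1(j_H)$, which is the statement for $A$. The claims for $B$, $C$ and $D$ follow from the identical argument after interchanging the roles of $\bb_1$ and $\bb_2$, of $\iota_1$ and $\iota_2$, and flipping the relevant signs in the offsets; this produces the reversed inequality for $B$ and $D$ and replaces $\iota_1$ by $\iota_2$ for $C$ and $D$.

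The one thing that needs a word of care is that a few connector offsets (the $\pm(n^2+1)$ coordinates) lie just outside the range $[-n^2,n^2]$ appearing in the definition of ``square-like''. I would deal with this by noting that both the definition and the explicit construction behind Theorem~\ref{thm:allissquare} --- which leaves room of order $n^{-2}$ around every reference point --- go through verbatim for the grid $\cK$ enlarged by any fixed constant factor, so every offset occurring in the connector blocks is still in the regime where the intersection rules apply. Beyond this, the proof is routine bookkeeping with the offset table, and the main obstacle is simply keeping the signs and the prefix-versus-suffix directions straight.
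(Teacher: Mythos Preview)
The paper does not give a proof of this lemma at all: it merely states ``Using this definition, it is easy to prove the following lemma'' and then moves on. So there is no paper proof to compare against; your proposal \emph{is} the missing verification.

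Your outline is correct in structure and identifies the right mechanism: a canonical $\Delta$ contains only the eight $X$-translates per gadget, the ``distant copies are disjoint'' clause restricts which $X$-blocks can touch a given connector block, and the ``horizontal/vertical neighbours intersect only when close'' clause turns each potential intersection into a single linear inequality in the relevant offset coordinate, with the half-integer shifts ensuring strictness. Your observation that the $\pm(n^2+1)$ offsets in $A,B,C,D$ formally overshoot the $[-n^2,n^2]$ window of the square-like definition, and that this is harmless because the construction in Theorem~\ref{thm:allissquare} has $\Theta(n^{-2})$ slack, is a genuine point the paper glosses over. The only place to be careful is in matching each connector block to its two adjacent $X$-blocks via Figure~\ref{fig:W1-hardness} and checking whether the relevant neighbour relation is in the $\bb_1$- or the $\bb_2$-direction (this determines which offset coordinate drives the inequality); once that bookkeeping is done, the prefix/suffix covering argument you sketch goes through exactly as you describe.
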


With the above lemmas, the correctness of the reduction follows. A feasible
grid tiling defines a dominating set of size $8k^2$: in gadget $G_{a,b}$, the
dominating $Q$-translates are $X_\ell\left(f(s_{a,b})\right),
\;\ell=1,\dots,8$. On the other hand, if there is a dominating set of size
$8k^2$, then there is a canonical dominating set of the same size  that
defines a feasible grid tiling.
\end{document}